\newtheorem{theorem}{Theorem}[section]
\newtheorem{lemma}[theorem]{Lemma}
\newtheorem{definition}[theorem]{Definition}
\newtheorem{corollary}[theorem]{Corollary}
\newtheorem{claim}[theorem]{Claim}
\newtheorem{property}[theorem]{Property}
\newtheorem{construction}[theorem]{Construction}
\newtheorem{remark}[theorem]{Remark}
\newcommand{\poly}{\mathsf{poly}}
\newcommand{\supp}{\mathsf{supp}}
\DeclareMathOperator{\E}{E}
\def\time_h{3}
\newcommand{\ConstHeight}{10d \cdot \frac{m}{n}+1}
\newcommand{\sq}{\hbox{\rlap{$\sqcap$}$\sqcup$}}
\newcommand{\qed}{\hspace*{\fill}\sq}
\newenvironment{proof}{\noindent {\it Proof.}\ }{\qed\par\vskip 4mm\par}
\newenvironment{proofof}[1]{\bigskip \noindent {\it Proof of #1.}\quad }
{\qed\par\vskip 4mm\par}
\begin{document}

\title{Derandomized Balanced Allocation}
\author{Xue Chen\thanks{Supported by NSF Grant CCF-1526952 and a Simons Investigator Award (\#409864, David Zuckerman), part of this work was done while the author was visiting the Simons Institute.}
\\
Computer Science Department\\
University of Texas at Austin\\
{\tt xchen@cs.utexas.edu}
}
%\date{}

\maketitle

\begin{abstract}
\normalsize
In this paper, we study the maximum loads of explicit hash families in the $d$-choice schemes when allocating sequentially $n$ balls into $n$ bins. We consider the \emph{Uniform-Greedy} scheme \cite{ABKU}, which provides $d$ independent bins for each ball and places the ball into the bin with the least load, and its non-uniform variant --- the \emph{Always-Go-Left} scheme introduced by V\"ocking~\cite{Vocking}. We construct a hash family with $O(\log n \log \log n)$ random bits based on the previous work of Celis et al.~\cite{CRSW} and show the following results.

\begin{enumerate}
\item With high probability, this hash family has a maximum load of $\frac{\log \log n}{\log d} + O(1)$ in the \emph{Uniform-Greedy} scheme.

\item With high probability, it has a maximum load of $\frac{\log \log n}{d \log \phi_d} + O(1)$ in the \emph{Always-Go-Left} scheme for a constant $\phi_d>1.61$.
\end{enumerate}

The maximum loads of our hash family match the maximum loads of a perfectly random hash function \cite{ABKU,Vocking} in the \emph{Uniform-Greedy} and \emph{Always-Go-Left} scheme separately, up to the low order term of constants. Previously, the best known hash families matching the same maximum loads of a perfectly random hash function in $d$-choice schemes were $O(\log n)$-wise independent functions \cite{Vocking}, which needs $\Theta(\log^2 n)$ random bits.
\end{abstract}

\thispagestyle{empty}
\setcounter{page}{0}
\pagebreak

\section{Introduction}
We investigate explicit constructions of hash functions for the classical problem of placing balls into bins.  The basic model is to hash $n$ balls into $n$ bins independently and uniformly at random, which we call $1$-choice scheme. For the $1$-choice scheme, it is well-known that each bin contains at most $O(\frac{\log n}{\log \log n})$ balls with high probability. For convenience, we always use logarithm of base 2 in this work. Here, by high probability, we mean probability $1-n^{-c}$ for an arbitrary constant $c$.

An alternative variant, which we call \emph{Uniform-Greedy}, is to provide $d \ge 2$ independent random choices for each ball and place the ball in the bin with the lowest load. In a seminal work, Azar et al. \cite{ABKU} showed that the \emph{Uniform-Greedy} scheme with $d$ independent random choices guarantees a maximum load of only $\frac{\log \log n}{\log d}+O(1)$ with high probability for $n$ balls, which significantly improved the load-balancing of 1-choice scheme even for $d=2$. This scheme has various  applications in many areas of computer science such as cryptography \cite{ANSS}. We refer to surveys \cite{MRS,M01,Wieder}) for a detailed discussion.

Surprisingly, V\"ocking \cite{Vocking} used an  \emph{asymmetric} scheme, called \emph{Always-Go-Left}, to further improve the maximum load to $\frac{\log \log n}{d \log \phi_d} + O(1)$ for $d \ge 2$ choices, where $\phi_d>1.61$ is the constant satisfying $\phi_d^d=1+\phi_d+\cdots+\phi_d^{d-1}$. This scheme partitions the $n$ bins into $d$ groups with equal size and uses an unfair tie-breaking mechanism such that the allocation process provides $d$ independent choices from the $d$ groups separately for each ball and always allocates it to the left-most bin with the least load. Moreover, V\"ocking \cite{Vocking} showed this load balancing is optimal for any random scheme using $d$ choices. For convenience, we always use $d$-choice schemes to denote both the \emph{Uniform-Greedy} and \emph{Always-Go-Left} scheme with $d \ge 2$ choices in this work.

Traditional analysis of load balancing assumes a perfectly random hash function. A large body of research is dedicated to the removal of this assumption by designing explicit hash families using fewer random bits. In the $1$-choice scheme, it is well known that $O(\frac{\log n}{\log \log n})$-wise independent functions guarantee a maximum load of $O(\frac{\log n}{\log \log n})$ with high probability, which reduces the number of random bits to $O(\frac{\log^2 n}{\log \log n})$. Recently, Celis et al. \cite{CRSW} designed a hash family with a description of $O(\log n \log \log n)$ random bits that achieves the same maximum load of $O(\frac{\log n}{\log \log n})$ as a perfectly random hash function.

In this work, we are interested in the explicit constructions of hash families that achieve \emph{the same maximum loads} as a perfectly random hash function in the $d$-choice schemes, i.e., $\frac{\log \log n}{\log d}+O(1)$ in the \emph{Uniform-Greedy} scheme\cite{ABKU,Vocking} and $\frac{\log \log n}{d \log \phi_d}+O(1)$ in the \emph{Always-Go-Left} scheme\cite{Vocking}. For these two schemes, $O(\log n)$-wise independent hash functions achieve the same maximum loads from V\"ocking's argument \cite{Vocking}, which provides a hash family with $\Theta(\log^2 n)$ random bits. Recently, Reingold et al. \cite{RRW14} showed that the hash family designed by Celis et al. \cite{CRSW} guarantees a maximum load of $O(\log \log n)$ of $n$ balls in the \emph{Uniform-Greedy} scheme with $O(\log n \log \log n)$ random bits.

\subsection{Our Contribution}
For multiple-choice schemes, we strengthen the hash family of Celis et al. \cite{CRSW} for the 1-choice scheme  --- our hash family is $O(\log \log n)$-wise independent over $n$ bins and ``almost" $O(\log n)$-wise independent over a fraction of $\poly(\log n)$ bins. Then we prove that our hash family derandomizes V\"ocking's witness tree argument \cite{Vocking} such that $O(\log n \log \log n)$ random bits could guarantee the same maximum loads as a perfectly random hash function in the multiple-choice schemes.

We first show our hash family guarantees a maximum load of $\frac{\log \log n}{\log d}+O(1)$ in the \emph{Uniform-Greedy} scheme\cite{ABKU,Vocking} with $d$ choices. We use $U$ to denote the pool of balls and consider placing $m=O(n)$ balls into $n$ bins here. Without loss of generality, we always assume $|U|=\poly(n)$ and $d$ is a constant at least $2$ in this work.
\begin{theorem}[Informal version of Theorem \ref{thm:sym}]\label{thm:intro_uniform}
For any $m=O(n)$, any constants $c$ and $d$, there exists a hash family with $O(\log n \log \log n)$ random bits such that given any $m$ balls in $U$, with probability at least $1-n^{-c}$, the max-load of the \emph{Uniform-Greedy} scheme with $d$ independent choices of $h$ is $\frac{\log \log n}{\log d} + O(1)$.
\end{theorem}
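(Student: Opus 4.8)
The plan is to derandomize the classical witness‑tree (or layered induction) analysis of the \emph{Uniform-Greedy} scheme by showing that the hash family from Celis et al.\ \cite{CRSW}, suitably strengthened, is ``random enough'' for every step of that argument. Recall the heavy‑load / witness‑tree proof of Azar et al.\ \cite{ABKU}: if some bin has load $\ge L$, then by peeling back the history of the $d$ balls that each saw load $\ge j-1$ in one of their choices, one builds a $d$‑ary tree of balls of depth $L - O(1)$; all balls in the tree are distinct, and the tree ``witnesses'' that roughly $n / 2^{c\cdot d^{L}}$‑type fractions of bins are heavy at successive levels. The number of balls in such a witness tree is $t = O(d^{L})$, which for $L = \frac{\log\log n}{\log d} + O(1)$ is only $\poly(\log n)$. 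Hence the event that a fixed witness tree is ``realized'' depends on the hash values of only $\poly(\log n)$ balls, and its probability under a truly random hash is bounded by the product of the per‑level collision probabilities. The derandomization strategy is: (i) union‑bound over all $\le n \cdot (\text{number of shapes}) \cdot |U|^{t}$ candidate witness trees; (ii) for each, argue that the realization probability under our hash family is within a constant factor of the truly‑random bound; (iii) conclude the total probability is $n^{-c}$.

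**The key steps, in order.** First I would state the structural lemma: if after placing $m = O(n)$ balls the max load is $\ge \frac{\log\log n}{\log d} + C$ for a suitable constant $C$, then there exists a full $d$‑ary witness tree of depth $L$ with all $t = O(d^{L}) = \poly(\log n)$ balls distinct, together with the property that each internal ``level‑$j$'' ball lands (under the greedy rule) in a bin that was already at height $\ge j$ — equivalently, each of its $d$ hashed bins is one of the ``$\beta_j n$ heaviest'' bins at that time, where $\beta_{j+1} \approx (e\beta_j^{?})$ decays doubly exponentially. Second, I would isolate exactly which independence / uniformity properties of the hash function are invoked: (a) the $d$ choices of each ball are hashed by $d$ independent copies of $h$, so within one ball the choices are independent; (b) across the $t$ distinct balls of the tree we only need that the joint distribution of their $dt = \poly(\log n)$ hash values is close — say, within a multiplicative $(1\pm o(1))$, or it suffices within a constant factor — to that of $dt$ truly uniform values, at least on the specific ``all land in a designated small set of bins'' events. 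This is where the strengthened property of our family enters: it is $O(\log\log n)$‑wise independent over all $n$ bins and ``almost'' $O(\log n)$‑wise independent when restricted to any $\poly(\log n)$‑size subset of bins, which is precisely the regime of a witness tree (each level lives inside a $\poly(\log n)$‑size set of heavy bins once $j$ is a constant above the base, and the low levels are handled by the $O(\log\log n)$‑wise independence over all bins). Third, I would do the union bound: the number of tree shapes is $t^{O(t)} = 2^{\poly(\log n)}$... wait, that is too big — instead one uses the standard trick that the shape is a fixed full $d$‑ary tree so there is essentially one shape, and the tree is specified by choosing the root bin ($n$ ways) and an injection of $\{1,\dots,t\}$ into $U$ (at most $|U|^{t} = 2^{O(t\log n)} = 2^{\poly(\log n)}$ ways). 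Multiplying by the per‑tree probability $\le \prod_j \beta_j^{d \cdot (\text{\# level-}j\text{ balls})}$, which is $n^{-\Omega(1)}$ with room to spare because the $\beta_j$ decay doubly exponentially, kills the $2^{\poly(\log n)}$ overcount. I would then set $C$ large enough that this product beats $n^{c}$ times the overcount.

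**The main obstacle** is step (ii): proving that our hash family assigns to each fixed witness tree a realization probability within a constant (or $(1+o(1))$) factor of the ideal one, \emph{uniformly over all $2^{\poly(\log n)}$ trees}, despite having only $O(\log\log n)$‑wise independence globally. The subtlety is that a witness tree event is a conjunction of $t = \poly(\log n)$ ``ball $b$'s $d$ choices all lie in heavy set $S_b$'' events, and $t \gg O(\log\log n)$, so one cannot treat the balls as jointly independent. The resolution — and the technical heart of the paper — must be that the heavy sets at all but the bottom $O(\log\log n)$ levels are forced to lie inside a single $\poly(\log n)$‑size ``core'' of bins, on which the family \emph{is} (almost) $O(\log n)$‑wise independent; the bottom few levels contribute only $O(\log\log n)$‑many balls, handled by global $O(\log\log n)$‑wise independence; and a careful conditioning argument splices the two regimes together, peeling the tree level by level from the root and re‑deriving, at each level, that the conditional distribution of the next level's hash values is close to uniform on the relevant small bin‑set. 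Making this peeling rigorous — in particular, tracking how conditioning on upper levels distorts the distribution on lower levels, and ensuring the small bin‑set structure is preserved — is where essentially all of the work lies; the witness‑tree combinatorics and the final union bound are then routine adaptations of \cite{ABKU,Vocking}.
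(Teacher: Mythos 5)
Your high-level plan (witness trees, a union bound over candidate trees, and the strengthened CRSW family) matches the paper, but the step you yourself flag as the main obstacle --- showing each candidate tree's realization probability is near-ideal despite only $O(\log\log n)$-wise global independence --- is exactly the step the paper solves with an idea absent from your sketch, and your proposed substitute would not work. The paper does not argue that the witness tree lives inside a small ``core'' of heavy bins: the leaves only need constant height, bins of height $\ge 3$ can number $\Theta(n)$, and the heavy sets at intermediate levels are random sets determined adaptively by the hash function itself, so ``almost $O(\log n)$-wise independence on a designated $\poly(\log n)$-size set of bins'' cannot be invoked for them; the level-by-level peeling/conditioning you describe is never carried out (and nothing like it appears in the paper). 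Instead, the paper exploits the explicit prefix/suffix structure $h=(h_1\circ\cdots\circ h_k\circ h_{k+1})\oplus g$: after conditioning on the prefix $h_1\circ\cdots\circ h_k$, Corollary~\ref{cor:last_hash} (the CRSW load bound) guarantees every superbin of $[n/\log^3 n]$ holds only $O(\log^3 n\cdot\frac{m}{n})$ balls, and since every tree edge forces the child's prefix-bin to coincide with one of the parent's, the number of candidate witness trees consistent with the prefix drops from $n^{O(\log n)}$ (your $|U|^{t}$ count with $t=\Theta(\log n)$, which a per-tree probability of $n^{-\Omega(1)}$ cannot beat) to $|\mathcal{F}_T|\le m\cdot(1.01\,d\log^3 n\cdot\frac{m}{n})^{|C|-1}=(\log n)^{O(\log n)}$. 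Only then is the suffix used: $h_{k+1}$ maps into $[\log^3 n]$ and is $(\log n)^{-O(\log n)}$-biased, hence close to $O(\log n)$-wise independent simultaneously on all $(\log n)^{O(\log n)}$ candidate ball-sets (Claim~\ref{clm:SD_subsets}), which is exactly what an $O(\log n\log\log n)$-bit seed can afford.

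Two further mismatches: you assume the witness tree has all distinct balls, but distinctness is not free; the paper prunes repeated balls and splits the analysis by the number of collisions, and it is the many-collision case (at least $3c$ collisions, whose relevant subconfiguration has only $O(\log\log n)$ nodes) --- not ``the bottom few levels of the tree'' --- where the $O(\log\log n)$-wise independence supplied by $g$ is actually used. Without the prefix-conditioning count and this collision dichotomy, your union bound does not close.
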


Then we show this hash family guarantees a load balancing of $\frac{\log \log n}{d \log \phi_d}+O(1)$ in the \emph{Always-Go-Left} scheme \cite{Vocking} with $d$ choices. Notice that the constant $\phi_d$ in equation $\phi_d^d=1+\phi_d+\cdots+\phi_d^{d-1}$ satisfies $1.61<\phi_2<\phi_3<\phi_4<\cdots<\phi_d<2$. Compared to the \emph{Uniform-Greedy} scheme, the \emph{Always-Go-Left} scheme \cite{Vocking} improves the load exponentially with regard to $d$. Even for $d=2$, the \emph{Always-Go-Left} scheme reduces the load from $\log \log n + O(1)$ to $0.7 \log \log n+O(1)$. From the lower bound $\frac{\log \log n}{d \log \phi_d}-O(1)$ on the load balancing of any random $d$-choice scheme shown by V\"ocking~\cite{Vocking}, the load of our hash family in the \emph{Always-Go-Left} scheme is optimal among $d$-choice schemes up to the low order term of constants.
\begin{theorem}[Informal version of Theorem \ref{thm:goleft}]\label{thm:intro_left}
For any $m=O(n)$, any constants $c$ and $d$, there exists a hash family with $O(\log n \log \log n)$ random bits such that given any $m$ balls in $U$, with probability at least $1-n^{-c}$, the max-load of the \emph{Always-Go-Left} scheme with $d$ independent choices of $h$ is $\frac{\log \log n}{d \log \phi_d} + O(1)$.
\end{theorem}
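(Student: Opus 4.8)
The plan is to derandomize V\"ocking's witness-tree argument for the \emph{Always-Go-Left} scheme, mirroring the proof of Theorem~\ref{thm:sym} (the \emph{Uniform-Greedy} case) but substituting V\"ocking's asymmetric witness tree --- whose branching across the $d$ groups is governed by the recursion $\phi_d^d = 1 + \phi_d + \cdots + \phi_d^{d-1}$ --- for the symmetric ABKU tree. Recall the classical statement (formalized in Section~\ref{asymmetric_tree}): if some bin ends with load at least $\ell+\beta$ for an appropriate constant $\beta=\beta(d)$, then the $d$ group-hash functions \emph{activate} some witness tree of height $\approx \ell$, i.e.\ (after the standard pruning that makes all ball-labels distinct) every ball-node's $d$ group-choices land on the bins named by its children; and V\"ocking's union bound over all such trees, for truly uniform hashing, shows this event has probability at most $n^{-c}$ once $\ell = \frac{\log\log n}{d\log\phi_d} + O(1)$. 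The quantitative feature that drives the whole derandomization is that at this value of $\ell$ the witness tree has only $\Theta(\log n)$ nodes, so each potential witness constrains only $\Theta(\log n)$ hash values.

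To replace the uniform hashing by our family $\mathcal{H}$, I exploit its two-scale independence: $\mathcal{H}$ is $\Theta(\log\log n)$-wise independent over all $n$ bins, and, under a ``typical'' event of probability $1-n^{-\omega(1)}$, its restriction to a suitably small sub-region of the bin space (a $1/\poly(\log n)$ fraction of the bins) is $n^{-\omega(1)}$-close to $\Theta(\log n)$-wise independent. The analysis splits each candidate witness tree into its top portion --- the ``spreading'' levels, which sit at bins of load at least some constant $i_0$ --- and its remaining ``heavy'' levels. For the spreading part, a direct moment computation that uses only the $\Theta(\log\log n)$-wise independence bounds the number of load-$\ge i_0$ bins by a $1/\poly(\log n)$ fraction with probability $1-n^{-c}$, and, using the hierarchical block structure of the strengthened CRSW family, confines these bins (group by group) to a single small sub-region. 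Any surviving witness then has its heavy part entirely inside that sub-region, where $\mathcal{H}$ is near-$\Theta(\log n)$-wise independent, so V\"ocking's union-bound computation goes through verbatim up to an additive $n^{-\omega(1)}$ error; since there are only $2^{O(\log n)}$ relevant witness trees, this error survives the sum. Combining the spreading bound, the heavy bound, and the $n^{-\omega(1)}$-probability atypical event gives the claimed $n^{-c}$.

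I expect two points to carry the real work. First, the interface between the two scales must be made to respect the asymmetric structure: one must localize the low-load bins \emph{within each of the $d$ groups} into group-wise small sub-regions simultaneously, and then check that the pruned asymmetric witness tree places its group-$j$ nodes inside group $j$'s sub-region, so that the near-$\log n$-wise independence can be invoked jointly for the $d$ independent copies of $h$. Second, the error bookkeeping: the statistical error in property (b) must be small enough to be multiplied by the exponential-in-$\log n$ count of witness trees and remain negligible, and conditioning on the ``typical'' event must not corrupt the $\Theta(\log\log n)$-wise independence used on the spreading levels. The $\phi_d$-combinatorics itself is not an obstacle --- it is imported unchanged from V\"ocking and only alters the constant $\beta$ and the additive $O(1)$ term relative to Theorem~\ref{thm:sym}.
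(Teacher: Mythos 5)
Your high-level goal (derandomize V\"ocking's asymmetric witness-tree union bound using the two-scale structure of the hash family) is the right one, but the decomposition you propose is not the one the paper uses and, as stated, has genuine gaps. First, your ``spreading levels'' step: you want a direct moment computation under $\Theta(\log\log n)$-wise independence to show that the number of bins of load at least a constant $i_0$ is an $1/\poly(\log n)$ fraction. Bin loads in the sequential $d$-choice process are not sums of nicely behaved indicators --- they depend adaptively on the entire allocation history --- so there is no straightforward moment bound; controlling level sets of the process is exactly the layered-induction difficulty that witness trees are introduced to bypass. Moreover, even with full randomness a constant $i_0$ gives only a constant fraction of load-$\ge i_0$ bins (you would need $i_0$ growing like $\log\log\log n$ to reach $1/\poly(\log n)$). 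Second, the ``confinement'' step is not a property the family has: Corollary~\ref{cor:last_hash} says the prefix functions $h_1\circ\cdots\circ h_k$ spread the balls \emph{evenly} over all $n/\log^3 n$ blocks, so high-load bins occur in essentially every block, not in one small sub-region; and the ``almost $O(\log n)$-wise independence'' of the family is not independence of its restriction to a region of bins, but Claim~\ref{clm:SD_subsets}: the last $3\log\log n$ bits ($h_{k+1}$) are close to uniform on a \emph{bounded, pre-specified} family of $(\log n)^{O(\log n)}$ subsets of balls of size $O(\log n)$. Conditioning on the (random, allocation-dependent) set of high-load bins is not compatible with invoking that property, and a witness tree's balls in the Always-Go-Left scheme necessarily span many blocks (one per edge constraint $h^{(i)}(u)=h^{(i)}(v)$), so its ``heavy part'' cannot be localized in a single block in any case. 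Third, you never treat pruned trees with repeated balls: pruning creates collision edges, the per-edge independence used in the union bound then fails, and this case needs its own argument.

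For comparison, the paper's proof of Theorem~\ref{thm:goleft} splits by the number of collisions in the pruned asymmetric witness tree of height $l+3c+1$ with $f(l,i)=\Theta(\phi_d^{(l-1)d+i})$ nodes. For trees with at least $3c$ collisions, only the subconfiguration $C'$ spanned by the first $3c$ collisions and their ancestors matters; it has $O(\log\log n)$ edges, so the $k_g=O(\log\log n)$-wise independence supplied by $g$ suffices, and the extra $3c$ factors of $d/n$ beat the $m^{|C'|}$ instantiations. For trees with fewer than $3c$ collisions, one fixes the prefixes $h_1\circ\cdots\circ h_k$; each edge forces equal prefixes, so by Corollary~\ref{cor:last_hash} the number of ball instantiations consistent with the prefixes is $|\mathcal{F}_T|\le m\,(1.01\log^3 n\cdot\frac{m}{n})^{|C|-1}=(\log n)^{O(\log n)}$ --- this replaces your localization step and is exactly what makes the $\delta_2$-biased suffix, via Claim~\ref{clm:SD_subsets}, act as $t$-wise independent with $t=O(\log n)$ on all surviving candidates, after which V\"ocking's computation (per-edge probability $d/\log^3 n$, leaf-height-$\ge b$ probability $2^{-3d^2}(n/m)^{2d}$ per leaf, union over $\mathcal{F}_T$ and over the $\poly(\log n)$ configurations) goes through with the additive error $|\mathcal{F}_T|\cdot\delta_2\cdot(\log^3 n)^{t}$ absorbed by the choice $\delta_2=(\log n)^{-O(\log n)}$. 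If you want to salvage your outline, the repair is essentially to abandon the load-level/sub-region split and adopt this prefix-fixing plus collision-counting structure.
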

At the same time, Our hash family has an evaluation time $O\big((\log \log n)^4\big)$ in the RAM model based on the algorithm designed by Meka et al. \cite{MRRR} for the hash family of Celis et al. \cite{CRSW}.

Finally, we show our hash family guarantees the same maximum load as a perfectly random hash function in the $1$-choice scheme for $m=n \cdot \poly(\log n)$ balls. Given $m>n \log n$ balls in $U$, the maximum load  of the $1$-choice scheme becomes $\frac{m}{n} + O(\sqrt{\log n \cdot  \frac{m}{n}})$ from the Chernoff bound. For convenience, we refer to this case of $m \ge n \log n$ balls as a \emph{heavy load}. In a recent breakthrough, Gopalan, Kane, and Meka~\cite{GKM15} designed a pseudorandom generator of seed length $O(\log n (\log \log n)^2)$ that fools the Chernoff bound within polynomial error. Hence the pseudorandom generator \cite{GKM15} provides a hash function with $O(\log n (\log \log n)^2)$ random bits for the heavy load case. Compared to the hash function of~\cite{GKM15}, we provide a simplified construction that achieves the same maximum load but only works for $m=n \cdot \poly(\log n)$ balls.

\begin{theorem}[Informal version of Theorem \ref{thm:heavy_load}]\label{thm:intro_heavy_load}
For any constants $c$ and $a \ge 1$, there exist a hash function generated by $O(\log n \log \log n)$ random bits such that for any $m=\log^a n \cdot n$ balls, with probability at least $1-n^{-c}$, the max-load of the $n$ bins in the 1-choice scheme with $h$ is $\frac{m}{n}+O\left( \sqrt{\log n} \cdot \sqrt{\frac{m}{n}}\right)$.
\end{theorem}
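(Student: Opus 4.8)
The plan is to show that the constructed hash function $h$ fools the Chernoff bound for every bin-load at once. Fix a bin $j\in[n]$ and write $L_j=\sum_{u\in U}\mathbf 1[h(u)=j]$; the family has uniform marginals (it is at least pairwise independent), so $\E[L_j]=m/n$, and it would suffice to prove $\Pr\big[L_j>\tfrac{m}{n}+C\sqrt{\tfrac{m}{n}\log n}\big]\le n^{-c-1}$ for a large constant $C=C(c)$ and union bound over the $n$ bins (the lower tail is not needed to upper bound the max load and would follow symmetrically). Equivalently one must control the $\Theta(\log n)$-th central moment of each $L_j$, and the two properties of the family that make this possible are: over the $n$ bins $h$ is $O(\log\log n)$-wise independent, and the $n$ bins split into $n/\poly(\log n)$ blocks of $\poly(\log n)$ consecutive bins such that, conditioned on which block each ball enters, the positions inside almost every block are close to $O(\log n)$-wise independent. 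This last property costs only $O(\log n)\cdot O(\log\log n)=O(\log n\log\log n)$ random bits precisely because a block has only $\poly(\log n)$ bins, and it is tailored to the heavy-load regime $m=n\cdot\poly(\log n)$ in which each block receives $\poly(\log n)\ge\log n$ balls in expectation.

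I would run a recursion mirroring the hierarchical structure of the construction. Write $h$ as a composition of $L=\log n/\log s=O(\log n/\log\log n)$ level hashes with block size $s=\poly(\log n)$ (so $s^L=n$): the level-$\ell$ aggregate $g_\ell\colon U\to[s^\ell]$ records which level-$\ell$ super-block a ball lies in, with $g_0$ trivial and $g_L=h$. I claim, by induction on $\ell$, that with probability $1-\ell\cdot n^{-c-2}$ every super-block $b$ has $N_b^{(\ell)}:=|\{u:g_\ell(u)=b\}|\le m/s^\ell+S_\ell$ for an explicit slack $S_\ell$, the base case being $N^{(0)}=m$, $S_0=0$. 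For the step, condition on $g_\ell$; inside a super-block $b$ holding at most $m/s^\ell+S_\ell$ balls, the level-$(\ell+1)$ hash spreads them over $s$ children via a function close to $k_{\ell+1}$-wise independent, so each child's count is a sum of at most $m/s^\ell+S_\ell$ almost $k_{\ell+1}$-wise independent indicators with mean about $m/s^{\ell+1}$; an even-moment estimate on its upper tail — valid because $k_{\ell+1}\le m/s^{\ell+1}$, which at the last step reads $k_L\le m/n$ and is exactly where the hypothesis $a\ge 1$ enters — bounds the probability that this count overshoots its mean by more than $d_{\ell+1}$, and a union bound over the $\le n$ children and super-blocks controls the total failure probability. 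Setting $S_{\ell+1}=S_\ell/s+d_{\ell+1}$ and unwinding gives $S_L=\sum_{\ell=1}^{L}d_\ell/s^{\,L-\ell}$; choosing each deviation $d_\ell$ at (a constant times) the Chernoff scale of its level makes this geometric sum dominated by its last term, so $S_L=O\big(\sqrt{(m/n)\log n}\big)$. The case $\ell=L$ then says exactly that every bin has load $m/n+O\big(\sqrt{(m/n)\log n}\big)$, with total failure probability $L\cdot n^{-c-2}\le n^{-c}$, which is the theorem.

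The crux — and the reason one wants the construction of this paper rather than plain $k$-wise independence — is the random-bit budget of this recursion. Each level-$\ell$ hash acts inside blocks of size $s=\poly(\log n)$ and must be close to $k_\ell$-wise independent, costing $O(k_\ell\log\log n)$ bits, and the $k_\ell$ must be chosen so that simultaneously (i) every per-level tail beats $n^{-c-2}/L$, (ii) the accumulated slack $S_L$ stays at the Chernoff scale $O(\sqrt{(m/n)\log n})$, and (iii) $\sum_{\ell=1}^{L}k_\ell=O(\log n)$, i.e.\ the seed length is $O(\log n\log\log n)$. These are compatible because the required independence is \emph{graded}: a level-$(L-j)$ super-block already holds $\approx (m/n)\,s^{\,j}=\poly(\log n)^{\,j}$ balls in expectation, so the concentration it needs is weak enough that $k_{L-j}=\Theta\!\big(\log n/(j\log\log n)\big)$-wise independence suffices, and then $\sum_j(\log s)\,k_{L-j}=\Theta(\log n)\sum_{j\le L}1/j=\Theta(\log n\log\log n)$ since $\log L=\Theta(\log\log n)$. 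Only the bottom $O(1)$ levels — in particular the leaves, where the target deviation genuinely is $O(\sqrt{(m/n)\log n})$ — need close to $\log n$-wise independence, which is exactly the ``almost $O(\log n)$-wise independent over $\poly(\log n)$ bins'' property and is also where $m/n\ge\log n$ is indispensable (so that the even-moment bound of degree $\Theta(\log n)$ applies). Checking that (i)--(iii) can in fact be met at once, together with the routine accounting that the ``almost''-independence error is small enough to be absorbed into every tail bound, is the technical heart of the proof; the rest is the moment-method and union-bound bookkeeping sketched above.
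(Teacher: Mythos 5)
Your high-level plan (hierarchical hashing, graded independence across levels, even-moment/Chernoff tail bounds per level, near-$O(\log n)$-wise independence only at the bottom where the range is $\poly(\log n)$) is the right spirit, but the concrete decomposition you chose breaks the seed-length claim, which is the whole point of the theorem. You split $h$ into $L=\Theta(\log n/\log\log n)$ levels of fan-out $s=\poly(\log n)$ and charge level $\ell$ only $O(k_\ell\log\log n)$ bits. That accounting omits the error parameter: at each level you must union bound over up to $n$ super-blocks/children and end with overall failure probability $n^{-c}$, so each level's hash must make its bad event have probability at most roughly $n^{-c-2}/L$. With a $k_\ell$-wise $\delta$-biased space (or any "almost $k_\ell$-wise independent" family), the tail estimate carries an additive term proportional to the bias/statistical error, so you need $\delta_\ell\le n^{-\Omega(1)}\cdot s^{-k_\ell/2}$, i.e.\ $\Omega(\log n)$ seed bits per level \emph{in addition to} the $O(k_\ell\log\log n)$ term; indeed, any family with seed $r_\ell$ only has events of probability that are multiples of $2^{-r_\ell}$, so a per-level guarantee of $1-n^{-\Omega(1)}$ forces $r_\ell=\Omega(\log n)$ for each of the $L$ independently seeded levels. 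The total is then $\Omega(L\log n)=\Omega(\log^2 n/\log\log n)$, which exceeds the $O(\log n\log\log n)$ budget; your conditions (i)--(iii) cannot all be met with this level structure, and sharing one seed across levels would destroy the conditioning step ("condition on $g_\ell$, then use fresh randomness of level $\ell+1$") that your induction relies on.

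The paper avoids exactly this trap by using only $k+1=O(\log\log n)$ levels with doubly-exponentially shrinking ranges: $h_i:U\to[n^{2^{-i}}]$ for $i\le k$ with $k=\log_2\frac{\log n}{2a\log\log n}$, each a $\delta_1$-biased space with $\delta_1=1/\poly(n)$ (so $O(\log n)$ bits per level is affordable, and a constant-order moment argument with per-level relative slack $\beta/(k+2-i)^2$, $\beta=\Theta(\sqrt{\log n}\sqrt{n/m})=O(1)$, suffices because every intermediate bin holds at least $\approx\log^{2a}n$ balls in expectation), and then a single last level $h_{k+1}:U\to[\log^{2a}n]$ from a $(\log n)^{-O(\log n)}$-biased space (cost $O(\log n\log\log n)$ bits), analyzed by the Schmidt--Siegel--Srinivasan Chernoff bound for $O(\log n)$-wise independence and a statistical-distance transfer of the form $\delta_2\binom{2(m/n)\log^{2a}n}{\le b}(\log^{2a}n)^b\le n^{-c-2}$. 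The multiplicative slacks $\prod_{i}(1+\beta/(k+2-i)^2)(1+\beta)\le 1+2\beta$ give the $\frac{m}{n}+O(\sqrt{\log n}\sqrt{m/n})$ bound; your geometric-sum bookkeeping of additive slacks plays the same role and is fine. So the gap is not in the probabilistic estimates but in the randomness budget: to make your argument go through you would have to replace the uniform $\poly(\log n)$ fan-out by ranges shrinking as $n^{2^{-i}}$ (or otherwise reduce the number of levels to $O(\log\log n)$), at which point you essentially recover the paper's proof.
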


\subsection{Previous Work}
\paragraph{The 1-choice scheme.} Natural explicit constructions of hash functions using a few random bits are $k$-wise independent functions, small-biased spaces, and $k$-wise small-biased spaces. For the $1$-choice scheme with $m=n$ balls, Alon et al.\cite{ADMPT} showed the existence of a pairwise independent hash family that always has a maximum load of $\sqrt{n}$. On the other hand, it is well known that $O(\frac{\log n}{\log \log n})$-wise independent functions achieve a maximum load of $O(\frac{\log n}{\log \log n})$ with high probability, which needs $\Theta(\frac{\log^2 n}{\log \log n})$ random bits. Using $O(\log n)$-wise small-biased spaces as milder restrictions, Celis et al.\cite{CRSW} designed a hash family with $O(\log n \log \log n)$ random bits achieving the same maximum load with high probability. 

For the heavy load case in the $1$-choice scheme, a perfectly random hash function guarantees a maximum load of $\frac{m}{n}+O(\sqrt{\log n \cdot \frac{m}{n}})$ from the Chernoff bound. Hence any pseudorandom generator fooling the Chernoff bound within polynomial small error is a hash family matching this maximum load. Schmidt et al. \cite{SSS95} showed that $O(\log n)$-wise independence could derandomize the Chernoff bound, which provides a hash function with $O(\log^2 n)$ random bits. In a recent breakthrough~\cite{GKM15}, Gopolan, Kane, and Meka designed a pseudorandom generator with seed length $O(\log n (\log \log n)^2)$ to fool halfspaces, the Chernoff bound, and many other classes, which provides a hash family of $O(\log n (\log \log n)^2)$ bits.

\paragraph{Multiple-choice schemes.} For $m=n$ balls in the $d$-choice schemes, the original argument of \cite{ABKU} adopts an inductive proof that relies on the assumption of full randomness. It is folklore (e.g., \cite{RRW14,DKRT}) that $O(\log n)$-wise independent functions could derandomize V\"{o}cking's witness tree argument \cite{Vocking} to achieve a maximum load of $\frac{\log \log n}{\log d}+O(1)$ in the \emph{Uniform-Greedy} scheme, which takes $\Theta(\log^2 n)$ random bits. Reingold et al. \cite{RRW14} prove the hash family of \cite{CRSW} derandomizes Cuckoo hashing and achieves a maximum load of $\log \log n + O(1)$ in the \emph{Uniform-Greedy} scheme for \emph{$n/C_0$} balls for any constant $C_0>2$. This leaves a gap for $m>n$ balls and $d>2$ choices, where the guarantee of the maximum load in Reingold et al. \cite{RRW14} becomes $\frac{C_0 m}{n} \cdot \big( \log \log n + O(1) \big)$.

V\"ocking~\cite{Vocking} introduced \emph{Always-Go-Left} scheme to further improve the maximum loads of $d$-choice schemes to $\frac{\log \log n}{d \log \phi_d}+O(1)$. In the same work, V\"ocking showed a lower bound to illustrate that the load $\frac{\log \log n}{d \log \phi_d}$ is optimal for random $d$-choice schemes. However, much less is known about the derandomization of the \emph{Always-Go-Left} scheme except $O(\log n)$-wise independent functions for V\"{o}cking's witness tree argument \cite{Vocking}, which is pointed out in \cite{RRW14,DKRT}.

We summarize these results in Table \ref{tab:previous_work}.

\begin{minipage}{\linewidth}
\centering
\vspace{0.15in}
\begin{tabular}{|l|c|c|c|}
\hline
\bf scheme & \bf reference & \bf maximum load & \bf number of random bits \\ \hline
1-choice & well known & $O(\frac{\log n}{\log \log n})$ & $\Theta(\frac{\log^2 n}{\log \log n})$ \\
1-choice & \cite{CRSW} & $O(\frac{\log n}{\log \log n})$ & $O(\log n \log \log n)$ \\
\emph{Uniform-Greedy} & \cite{ABKU} & $\frac{\log \log n}{\log d}+O(1)$ & full randomness \\
\emph{Uniform-Greedy} & \cite{Vocking} & $\frac{\log \log n}{\log d}+O(1)$ & $\Theta(\log^2 n)$ \\
\emph{Uniform-Greedy} & \cite{RRW14} & $O(\log \log n)$ & $O(\log n \log \log n)$ \\
\emph{Uniform-Greedy} & this work & $\frac{\log \log n}{\log d}+O(1)$ & $O(\log n \log \log n)$\\ \hline
\emph{Always-Go-Left} & \cite{Vocking} & $\frac{\log \log n}{d \log \phi_d}+O(1)$ & $\Theta(\log^2 n)$\\ \hline
%\emph{Always-Go-Left} & \cite{Woelf} & $\frac{\log \log n}{d \log \phi_d}+O(1)$ & $n^{\Theta(1)}$\\
\emph{Always-Go-Left} & this work & $\frac{\log \log n}{d \log \phi_d}+O(1)$ & $O(\log n \log \log n)$\\
\hline
\end{tabular}
\captionof{table}{Previous work about the random bits and maximum loads of multiple-choice schemes with $m=n$ balls}
\label{tab:previous_work}
\vspace{0.1in}
\end{minipage}

Another line of research on hash families focuses on studying functions with a constant evaluation time despite the expense of the number of random bits. For $O(\log n)$-wise independence, Siegel \cite{Siegel} showed how to implement it in constant time. For multiple-choice schemes, Woelfel \cite{Woelf} showed that the hash family of \cite{DW03}, which takes constant evaluation time and $n^{\Theta(1)}$ random bits, guarantees the same maximum loads as a perfectly random hash functions in the  multiple-choices schemes. 

P\v{a}tra\c{s}cu and Thorup \cite{PT12} introduced simple tabulation hashing, a function with constant evaluation time and $n^{\Theta(1)}$ random bits, that can replace the perfectly random hash functions in various applications. Recently, Dahlgaard et al. \cite{DKRT} prove that simple tabulation has the same \emph{expected} load balancing as a fully random hash function for 2-choice scheme, which was generalized to the \emph{Uniform-Greedy} and \emph{Always-Go-Left} schemes by Aamand, Knudsen, and Thorup \cite{AKT18}. However, Dahlgaard et al. \cite{DKRT} pointed out that for any constant $C$, with probability $n^{-\Theta(1)}$, the max-load becomes $C \cdot \log \log n$ in the \emph{Uniform-Greedy} with 2 choices.  

For the hash family designed by Celis et al. \cite{CRSW}, Meka et al.~\cite{MRRR} improved the evaluation time of \cite{CRSW} to $O((\log \log n)^2)$. 

\subsection{Discussion}
In this work, we provide a hash family with $O(\log n \log \log n)$ random bits that matches the maximum loads of a perfectly random hash function in multiple-choice schemes. A natural question is to reduce the number of random bits to $O(\log n)$. A starting point would be to improve the hash families in the $1$-choice scheme, where the best construction needs $O(\log n \log \log n)$ random bits from Celis et al.\cite{CRSW}.

For the $1$-choice scheme, the load of each bin is the summation of $m$ random indicator variables, which allows us to use the pseudorandom generators for concentration bounds \cite{SSS95,GKM15} and space-bounded computation \cite{Nisan92,NZ96,GKM15}. One interesting direction is to investigate the techniques of these two areas in the design of hash functions. At the same time, although Alon et al.\cite{ADMPT} proved lower bounds of $k$-wise independent functions in the 1-choice scheme, it is still interesting to explore natural algebraic constructions such as the quadratic characters of modulo $p$ for small-biased spaces in \cite{AGHP}.

Our work is an application of the technique --- milder restrictions \cite{CRSW,GMRTV} in the design of pseudorandom generators. Even though $k$-wise independence and small-biased spaces fool variants of classes, these two tools will not provide optimal pseudorandom generators for basic classes such as the 1-choice scheme \cite{ADMPT} or read-once CNFs \cite{DETT}. After Celis et al. \cite{CRSW} introduced milder restrictions, this technique has been successfully applied to construct almost optimal pseudorandom generators with $\log n \cdot \poly(\log \log n)$ random bits for several classes such as 1-choice scheme \cite{CRSW}, read-once CNFs \cite{GMRTV}, modulo $p$ functions and halfspaces \cite{GKM15}. It would be of great interest to investigate this technique to the design of pseudorandom generators for broader classes such as $AC^0$ circuits and space-bounded computation.

\subsection{Organization}
This paper is organized as follows. In Section~\ref{sec:pre}, we introduce some notation and tools. We define witness trees and revisit V\"{o}cking's argument \cite{Vocking} in Section~\ref{sec:witness_tree}. We show the construction of our hash family in Section~\ref{sec:hash_func} and sketch our derandomization in Section~\ref{sec:proof_sketch}. Next we prove Theorem~\ref{thm:intro_uniform} in Section~\ref{sec:uniform_scheme} and Theorem~\ref{thm:intro_left} in Section~\ref{asymmetric_tree}, which provide upper bounds on the maximum loads of the \emph{Uniform-Greedy} scheme and the \emph{Always-Go-Left} scheme separately. Finally, we prove Theorem~\ref{thm:intro_heavy_load} in Section~\ref{sec:heavy_load} which shows a bound of the heavy load case in the 1-choice scheme.

%%%%%%%%%%%%%%%%%%%%%%%%%%%%%%%%%%%%%%%%%%%%%%%%%%%%%%%%%%%%%%%%%%%%%%%%

\section{Preliminaries}\label{sec:pre}
We use $U$ to denote the pool of balls, $m$ to denote the numbers of balls in $U$, and $n$ to denote the number of bins. We assume $m \ge n$ and $n$ is a power of $2$ in this work. We use $1_{E}$ to denote the indicator function of the event $E$ and $\mathbb{F}_p$ to denote the Galois field of size $p$ for a prime power $p$. 

\begin{definition}
Given a prime power p, a distribution $D$ on $\mathbb{F}_p^{n}$ is a $\delta$-biased space if for any non-trivial character function $\chi_\alpha$ in $\mathbb{F}_p^n$, $\underset{x \sim D}{\E}[\chi_\alpha(x)]\le \delta$.

A distribution $D$ on $\mathbb{F}_p^{n}$ is a $k$-wise $\delta$-biased space if for any non-trivial character function $\chi_\alpha$ in $\mathbb{F}_p^n$ of support size at most $k$, $\underset{x \sim D}{\E}[\chi_\alpha(x)]\le \delta$.
\end{definition}
The seminal works \cite{NN,AGHP} provide small-biased spaces with optimal seed length.
\begin{lemma}[\cite{NN,AGHP}]\label{lem:small_biased}
For any prime power $p$ and integer $n$, there exist explicit constructions of $\delta$-biased spaces on $\mathbb{F}_p^n$ with seed length $O(\log \frac{p n}{\delta})$ and explicit constructions of $k$-wise $\delta$-biased spaces with seed length $O(\log \frac{k p \log n}{\delta})$
\end{lemma}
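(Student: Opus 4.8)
The plan is to recover both statements from the standard constructions underlying \cite{NN,AGHP}: the ``powering'' (LFSR-type) construction for plain $\delta$-biased spaces, and a coding-theoretic reduction of the $k$-wise case to a plain $\delta$-biased space over a much shorter string.

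\textbf{Plain case.} First I would set up the powering construction. Fix $q=p^{\ell}$ with $q\ge n/\delta$, let $\mathrm{Tr}\colon\mathbb{F}_q\to\mathbb{F}_p$ be the trace (an $\mathbb{F}_p$-linear surjection), take the seed to be a pair $(\alpha,\beta)\in\mathbb{F}_q^2$, and define the output by $x_i=\mathrm{Tr}(\beta\alpha^{i})$ for $i=0,\dots,n-1$. For a nontrivial character $\chi_a$, i.e.\ $0\neq a\in\mathbb{F}_p^n$, set $g(X)=\sum_i a_iX^{i}$, a nonzero polynomial of degree $\le n-1$; then $\sum_i a_ix_i=\mathrm{Tr}\!\big(\beta\,g(\alpha)\big)$. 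Conditioned on $g(\alpha)\neq 0$ this is a balanced $\mathbb{F}_p$-valued function of $\beta$, so the character averages to $0$, while $\Pr_\alpha[g(\alpha)=0]\le (n-1)/q\le\delta$; hence the bias is $\le\delta$, with seed $2\ell\lceil\log p\rceil=O(\log\tfrac{pn}{\delta})$.

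\textbf{$k$-wise case.} Next I would compose this with a good linear code. Let $H\in\mathbb{F}_p^{N\times n}$ be the parity-check matrix of a length-$n$ code of minimum distance $\ge k+1$ --- e.g.\ a primitive BCH code, which is explicit and needs only $N=O(k\log_p n)$ parity checks --- so that every $k$ columns of $H$ are $\mathbb{F}_p$-linearly independent. Draw $x$ from a $\delta$-biased space on $\mathbb{F}_p^{N}$ via the plain case and output $z=H^{\top}x\in\mathbb{F}_p^{n}$. For any character $\chi_a$ with $|\supp(a)|\le k$ one has $\chi_a(z)=\chi_{Ha}(x)$, and $Ha$ is a nonzero linear combination of at most $k$ columns of $H$, hence $Ha\neq 0$; therefore $\big|\E[\chi_a(z)]\big|\le\delta$. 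The seed is that of a $\delta$-biased space on $\mathbb{F}_p^{N}$, namely $O(\log\tfrac{pN}{\delta})=O(\log\tfrac{pk\log n}{\delta})$, matching the claim (for $k\ge n$ one simply uses the plain construction directly).

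\textbf{Main obstacle.} The only genuinely non-routine ingredients are the bias estimate for the powering construction --- the ``a nonzero low-degree polynomial has few roots, and off its roots the trace is balanced'' argument --- and quoting the correct parameters of an explicit distance-$(k+1)$ code (the BCH bound giving $N=O(k\log_p n)$, and the equivalence between minimum distance $\ge k+1$ and $k$-wise linear independence of the parity-check columns). Everything else is bookkeeping of seed lengths. Alternatively, one can substitute the quadratic-character construction of \cite{AGHP} or the expander-walk construction of \cite{NN} for the powering construction without affecting the stated bounds.
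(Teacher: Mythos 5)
Your reconstruction is correct: the paper does not prove this lemma but simply cites \cite{NN,AGHP}, and your argument—the powering construction with the trace map for the plain $\delta$-biased space, plus the reduction of the $k$-wise case through the parity-check matrix of an explicit distance-$(k+1)$ (BCH) code—is exactly the standard route in those references, with the seed-length accounting matching the stated bounds.
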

Given two distributions $D_1$ and $D_2$ with the same support $\mathbb{F}_p^n$, we define the statistical distance to be $\|D_1-D_2\|_{1}=\sum_{x \in \mathbb{F}_p^n} |D_1(x)-D_2(x)|$. Vazirani \cite{Vazirani} proved that small-biased spaces are close to the uniform distribution.
\begin{lemma}[\cite{Vazirani}]\label{lem:small_biased_space}
A $\delta$-biased space on $\mathbb{F}_p^{n}$ is $\delta \cdot p^{n/2}$ close to the uniform distribution in  statistical distance.

Given a subset $S$ of size $k$ in $[n]$, a $k$-wise $\delta$-biased space on $\mathbb{F}_p^{n}$ is $\delta \cdot p^{k/2}$ close to the uniform distribution on $S$ in statistical distance.
\end{lemma}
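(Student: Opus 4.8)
The plan is to argue via Fourier analysis over $\mathbb{F}_p^n$: view the two probability mass functions as real-valued functions on $\mathbb{F}_p^n$, bound their $\ell_1$ distance by the $\ell_2$ distance via Cauchy--Schwarz, and then control the $\ell_2$ distance using Parseval's identity together with the small-bias hypothesis. For the second claim I will reduce to the first by passing to the marginal distribution on the $k$ chosen coordinates.

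First I would fix notation for the Fourier basis: for $\alpha \in \mathbb{F}_p^n$ let $\chi_\alpha$ be the corresponding additive character, so that every $f:\mathbb{F}_p^n \to \mathbb{C}$ expands as $f = \sum_\alpha \hat f(\alpha)\chi_\alpha$ with $\hat f(\alpha) = \E_x[f(x)\overline{\chi_\alpha(x)}]$ under the uniform measure, and Parseval reads $\E_x[|f(x)|^2] = \sum_\alpha |\hat f(\alpha)|^2$. Apply this to $g = D - U$, where $U$ is the uniform mass function $U(x) = p^{-n}$. The trivial coefficient vanishes, $\hat g(0) = \E_x[g(x)] = p^{-n}\sum_x\big(D(x) - p^{-n}\big) = 0$, since $D$ is a probability distribution. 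For $\alpha \neq 0$ a direct computation gives $\hat g(\alpha) = p^{-n}\,\overline{\E_{x\sim D}[\chi_\alpha(x)]}$, where the uniform term drops because $\sum_x \chi_\alpha(x) = 0$; applying the $\delta$-biased hypothesis to the nontrivial character $\overline{\chi_\alpha} = \chi_{-\alpha}$ yields $|\hat g(\alpha)| \le \delta\,p^{-n}$.

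Next I would combine these bounds. By Cauchy--Schwarz, $\|D - U\|_1 = \sum_x |g(x)| = p^n\,\E_x[|g(x)|] \le p^n\sqrt{\E_x[|g(x)|^2]}$, and by Parseval with the coefficient bounds above, $\E_x[|g(x)|^2] = \sum_{\alpha \neq 0}|\hat g(\alpha)|^2 \le (p^n-1)\,\delta^2 p^{-2n} \le \delta^2 p^{-n}$. Hence $\|D-U\|_1 \le p^n \cdot \delta\,p^{-n/2} = \delta\,p^{n/2}$, which is the first claim.

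For the second claim I would reduce to the first by marginalization. Fix $S \subseteq [n]$ with $|S| = k$ and let $D_S$ be the projection of $D$ onto the coordinates in $S$, a distribution on $\mathbb{F}_p^S$. Every nontrivial character of $\mathbb{F}_p^S$ is the restriction of some $\chi_\alpha$ with $\supp(\alpha) \subseteq S$, hence of support size at most $k$, and $\E_{y\sim D_S}[\chi_\alpha(y)] = \E_{x\sim D}[\chi_\alpha(x)]$; so the $k$-wise $\delta$-biased hypothesis makes $D_S$ a $\delta$-biased space on $\mathbb{F}_p^S \cong \mathbb{F}_p^k$, and the first part (with $n$ replaced by $k$) shows $D_S$ is $\delta\,p^{k/2}$-close to uniform on $S$. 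The argument presents no real obstacle beyond keeping the Fourier normalization consistent; the only point needing a little care is the Cauchy--Schwarz step, where the factor $p^n$ coming from $\E_x[|g(x)|] = p^{-n}\sum_x|g(x)|$ exactly balances the $p^{-n/2}$ gained from Parseval.
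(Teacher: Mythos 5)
Your proof is correct and is exactly the standard Fourier-analytic argument (Cauchy--Schwarz plus Parseval, then marginalization for the $k$-wise case) that the paper invokes by citing Vazirani rather than reproving; the paper contains no proof of its own for this lemma. No gaps: the coefficient bound $|\hat g(\alpha)|\le \delta p^{-n}$, the Parseval step, and the reduction of the second claim to the first via the projection onto $S$ are all handled correctly.
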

%Given a distribution $D_1$ and a subset $\mathcal{D}$ of distributions, we define the distance between $D_1$ and $\mathcal{D}$ to be $\min_{D_2 \in \mathcal{D}}\{ \|D_1-D_2\|_{1} \}$.

Given a distribution $D$ on functions from $U$ to $[n]$, $D$ is $k$-wise independent if for any $k$ elements $x_1,\ldots,x_k$ in $U$, $D(x_1),\ldots,D(x_k)$ is a uniform distribution on $[n]^k$. For small-biased spaces, we choose $p=n$ and the space to be $\mathbb{F}_n^{|U|}$ in Lemma~\ref{lem:small_biased} and summarize the discussion above.
\begin{lemma}
Given $k$ and $n$, a $k$-wise $\delta$-biased space from $U$ to $[n]$ is $\delta \cdot n^{k/2}$ close to the uniform distribution from $U$ to $[n]$ on any $k$ balls, which needs $O(\log \frac{k n \log n}{\delta})$ random bits.
\end{lemma}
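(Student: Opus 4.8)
The plan is to obtain this statement by specializing the two preceding lemmas. First I would identify a hash function $h \colon U \to [n]$ with the vector $\big(h(x)\big)_{x \in U} \in \mathbb{F}_n^{|U|}$; since $n$ is a power of $2$ by the standing assumption, we may take $[n] = \mathbb{F}_n$, so that a distribution over such functions is literally a distribution on $\mathbb{F}_n^{|U|}$. Under this identification, being $k$-wise $\delta$-biased as a distribution over functions coincides with being $k$-wise $\delta$-biased on $\mathbb{F}_n^{|U|}$, and ``the uniform distribution from $U$ to $[n]$ on any $k$ balls'' coincides with ``the uniform distribution on any $k$ coordinates of $\mathbb{F}_n^{|U|}$.''

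Next I would apply Lemma~\ref{lem:small_biased_space} with $p = n$ and ambient dimension $|U|$: for any subset $S \subseteq U$ with $|S| = k$, a $k$-wise $\delta$-biased space on $\mathbb{F}_n^{|U|}$ is $\delta \cdot n^{k/2}$ close in statistical distance to the uniform distribution on the coordinates indexed by $S$. Unwinding the identification above, this is exactly the claimed $\delta \cdot n^{k/2}$ bound on the distance to a uniformly random hash on any $k$ balls; in particular it recovers exact $k$-wise independence in the limit $\delta \to 0$, matching the definition of $k$-wise independence given just before the lemma.

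For the seed length I would invoke Lemma~\ref{lem:small_biased}, which supplies an explicit $k$-wise $\delta$-biased space on $\mathbb{F}_p^{N}$ using $O\big(\log \frac{k p \log N}{\delta}\big)$ random bits. Taking $p = n$ and $N = |U|$, and using the standing assumption $|U| = \poly(n)$ so that $\log |U| = O(\log n)$, the seed length becomes $O\big(\log \frac{k n \log n}{\delta}\big)$, which completes the argument.

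There is no genuinely hard step: the proof is an instantiation of Lemmas~\ref{lem:small_biased} and~\ref{lem:small_biased_space} with $p=n$. The only points needing a moment's care are (i) that $n$ being a power of $2$ is what lets us regard $[n]$ as the field $\mathbb{F}_n$, so that the character-based notion of $\delta$-bias applies at all, and (ii) that the dependence on the number of balls enters only through $\log |U|$, which is absorbed into $O(\log n)$ precisely because $|U| = \poly(n)$ --- this is what keeps the bit count at $O\big(\log \frac{k n \log n}{\delta}\big)$ rather than $O\big(\log \frac{k n \log |U|}{\delta}\big)$.
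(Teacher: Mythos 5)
Your proposal is correct and matches the paper's own route: the paper states this lemma precisely as a summary obtained by instantiating Lemma~\ref{lem:small_biased} and Lemma~\ref{lem:small_biased_space} with $p=n$ on the coordinate set $U$, which is exactly what you do, including absorbing $\log|U|$ into $O(\log n)$ via $|U|=\poly(n)$.
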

\begin{remark}
In this work, we always choose $\delta \le 1/n$ and $k=\poly(\log n)$ in the small biased spaces such that the seed length is $O(\log \frac{1}{\delta})$. At the same time, we only use $k$-wise small-biased spaces rather than small biased spaces to improve the evaluation time from $O(\log n)$ to $O(\log \log n)^4$.
\end{remark}
We state the Chernoff bound in $k$-wise independence by Schmidt et al. in \cite{SSS95}.
\begin{lemma}[Theorem 5 (I) (b) in \cite{SSS95}]\label{lem:k_wise_independence_chernoff}
If $X$ is the sum of $k$-wise independent random variables, each of which is confined to the interval $[0,1]$ with $\mu=\E[X]$, then for $\delta \le 1$ and $k \ge \delta^2 \mu \cdot e^{-1/3}$, $$\Pr[|X- \mu| \ge \delta \mu] \le e^{-\delta^2 \mu/3}.$$
\end{lemma}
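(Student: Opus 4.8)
The plan is to prove this by the moment method, exploiting the fact that a single high moment of a sum only ``sees'' a bounded number of the summands at a time. Write $X=\sum_i X_i$ and put $\mu_i=\E X_i$, so that $X-\mu=\sum_i(X_i-\mu_i)$. Fix an even integer $t\le k$, to be chosen at the very end. Since $(X-\mu)^t\ge 0$, Markov's inequality gives
\[
\Pr\big[|X-\mu|\ge\delta\mu\big]\;\le\;\frac{\E\big[(X-\mu)^t\big]}{(\delta\mu)^t},
\]
and the point of using an \emph{even} moment is precisely that one estimate then controls both tails at once.

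First I would reduce the moment computation to the fully independent case. Expand $(X-\mu)^t=\sum (X_{i_1}-\mu_{i_1})\cdots(X_{i_t}-\mu_{i_t})$ over all $t$-tuples $(i_1,\dots,i_t)$. Every monomial mentions at most $t\le k$ distinct indices, so by linearity of expectation and the $k$-wise independence hypothesis each term has exactly the expectation it would have for a fully independent family with the same marginals. Hence $\E[(X-\mu)^t]$ equals the $t$-th central moment of a sum of \emph{independent} $[0,1]$-valued random variables, and from here on I may assume full independence.

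Next I would bound that central moment by a moment-generating-function comparison. For independent $X_i\in[0,1]$, convexity gives $\E[e^{sX_i}]\le 1+(e^s-1)\mu_i\le e^{(e^s-1)\mu_i}$, hence $\E[e^{s(X-\mu)}]\le e^{(e^s-1-s)\mu}$ for every $s>0$, and similarly $\E[e^{-s(X-\mu)}]\le e^{(e^{-s}-1+s)\mu}\le e^{(e^s-1-s)\mu}$ since $\sinh s\ge s$. For even $t$ and any real $z$ the Taylor series of $\cosh$ yields $(sz)^t\le \tfrac{t!}{2}\big(e^{sz}+e^{-sz}\big)$; taking $z=X-\mu$ and expectations gives
\[
\E\big[(X-\mu)^t\big]\;\le\;\frac{t!}{s^t}\,e^{(e^s-1-s)\mu}\qquad\text{for all }s>0.
\]

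Finally I would optimize. Combining the displays,
\[
\Pr\big[|X-\mu|\ge\delta\mu\big]\;\le\;\frac{t!}{(s\delta\mu)^t}\,e^{(e^s-1-s)\mu},
\]
and it remains to choose $s$ and the even integer $t\le k$ so that the right-hand side is at most $e^{-\delta^2\mu/3}$. Since $\delta\le 1$ one takes $s$ of order $\delta$, so that $e^s-1-s$ is close to $s^2/2$, applies Stirling in the form $t!\le (t/e)^t e\sqrt t$, and picks $t$ an even integer of order $\delta^2\mu$ (about $\delta^2\mu\, e^{-1/3}$). The hypothesis $k\ge\delta^2\mu\, e^{-1/3}$ is exactly what makes such a $t$ available; when $\delta^2\mu$ is small the asserted bound is essentially a constant and a low-order estimate (e.g.\ $t=2$ with $\Var X\le\mu$) already suffices. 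I expect the only delicate point to be tracking the constants so that the exponent comes out to exactly $\delta^2\mu/3$: the crude bounds above lose constant factors in the exponent, so the final tuning of $s$ and $t$ must be done carefully, which is the substance of the calculation in \cite{SSS95}. Everything else is routine bookkeeping.
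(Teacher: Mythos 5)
First, a point of reference: the paper does not prove this lemma at all --- it is imported verbatim as Theorem 5 (I) (b) of Schmidt--Siegel--Srinivasan \cite{SSS95}, whose own proof runs through estimates of binomial/elementary-symmetric-polynomial moments of $X$ rather than central moments. So there is no in-paper argument to compare against; the comparison has to be with \cite{SSS95}, and your sketch takes the other standard (Bellare--Rompel-style) route.

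As far as it goes, your outline is sound: Markov with an even $t\le k$, the observation that every degree-$t$ monomial involves at most $k$ distinct indices so $\E[(X-\mu)^t]$ may be computed as if the summands were fully independent, and the bound $\E[(X-\mu)^t]\le \frac{t!}{s^t}e^{(e^s-1-s)\mu}$ via $(sz)^t\le t!\cosh(sz)$ are all correct. The genuine gap is the final step, which is the entire content of the lemma: you must produce $s$ and an even integer $t\le k$, with only $k\ge \delta^2\mu e^{-1/3}$ guaranteed, making $\frac{t!}{(s\delta\mu)^t}e^{(e^s-1-s)\mu}\le e^{-\delta^2\mu/3}$, and you explicitly defer this tuning to \cite{SSS95}. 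It is not routine bookkeeping: writing $M=\delta^2\mu$ and optimizing $s=\Theta(\delta)$, $t\approx e^{-1/3}M$, the exponent per unit of $M$ beats $-1/3$ only by a small constant margin, so the Stirling prefactor $e\sqrt{t}$ and the rounding of $t$ to an even integer are absorbed only once $M$ exceeds a moderately large constant; for moderate $M$ the stated inequality does not follow from these crude estimates. Worse, your fallback for small $M$ is false: Chebyshev ($t=2$) gives $1/M$, which exceeds $e^{-M/3}$ already at $M=1.5$, and when $M\le e^{1/3}$ the hypothesis even permits $k=1$, where no even $t\ge 2$ is available at all (this degenerate regime is exactly where one must consult the fine print of \cite{SSS95}). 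So the regime $\delta^2\mu=O(1)$ and the exact constant $1/3$ --- precisely what distinguishes this lemma from the generic $k$-wise moment bound --- are not established by your argument.
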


%We define the \emph{Always-Go-Left} scheme and discuss its maximum load in~Section~\ref{asymmetric_tree}.
%\begin{definition}[\emph{Always-Go-Left} with $d$ choices]
%The allocation process first partitions the bins into $d$ groups $G_1,\cdots,G_d$ of the same size $n/d$. Let $h^{(1)},\cdots,h^{(d)}$ be $d$ functions from $U$ to $G_1,\cdots,G_d$ separately. For each ball $i$, the algorithm considers $d$ bins $\{h^{(1)}(i) \in G_1, \cdots, h^{(d)}(i) \in G_d\}$ and chooses the bin with the least number of balls. If there are several bins with the least number of balls, our algorithm always chooses the bin with the smallest group number.\end{definition}

\section{Witness Trees}\label{sec:witness_tree}
We first provide several notation and definitions in this section. Then we review the witness tree argument of V\"{o}cking \cite{Vocking} for the \emph{Uniform-Greedy} scheme.

\begin{definition}[\emph{Uniform-Greedy} with $d$ choices]
Let $h^{(1)},\ldots,h^{(d)}$ be $d$ hash functions from $U$ to $[n]$. The process inserts balls in any fixed order as follows:  for each ball $i$, the algorithm considers $d$ bins $\{h^{(1)}(i),\ldots,h^{(d)}(i)\}$ and puts the ball $i$ into the bin with the least load among $\{h^{(1)}(i),\ldots,h^{(d)}(i)\}$. When there are several bins with the least load, it picks an arbitrary one.
\end{definition}

For convenience, given $m$ balls $b_1,\cdots,b_m \in [m]$, we fix the order to be $b_1<b_2<\cdots<b_m$ in this work. We define the \emph{height} of a ball to be the height of it on the bin allocated in the above process.%\begin{enumerate}
%\item A ball $b$ with height $l+4$ \big(for some $l=\log_d \log n +O(1)$\big) corresponds to a $d$-ary witness tree $T$ of height $l$ whose leaves have height at least $4$.
%\item Each node in the witness tree corresponds to a ball where the root corresponds to $b$.
%\item A ball $v$ is the $i$th child (for $i \in [d]$) of a ball $u$ iff the ball $v$ is the top ball of the bin $h^{(i)}(u)$ when we place $u$. This happens only if
%    \begin{equation}\label{eq:h}
%    h^{(i)}(u) \in \{h^{(1)}(v),h^{(2)}(v),\cdots,h^{(d)}(v)\}.
%    \end{equation}
%\end{enumerate}

Next we follow the notation of V\"{o}cking \cite{Vocking} to define witness trees and pruned witness trees. Given the balls and $d$ hash functions $h^{(1)},\ldots,h^{(d)}$ in the allocation process, we construct a symmetric witness tree for each ball in this process.

\begin{definition}[Symmetric witness trees]
Given a ball $b$ and a parameter $l$ less than the height of $b$, the symmetric witness tree $T$ with height $l$ for $b$ is a complete $d$-ary tree of height $l$. Every node $w$ in this tree corresponds to a ball $T(w) \in [n]$; and the root corresponds to the ball $b$. A ball $u$ in $T$ has a ball $v$ as its $i$th child iff when we allocate $u$ in the process, ball $v$ is the top ball in the bin $h^{(i)}\big(u\big)$. Hence $v<u$ and the bin $h^{(i)}(u)$ is in the subset $\left \{h^{(1)}(v),\ldots,h^{(d)}(v)\right\}$ of $[n]$ when $v$ is the $i$th child of $u$.
\end{definition}

Next we trim the repeated nodes in a witness trees such that there is no duplicate edge after the trimming. 

\begin{definition}[Pruned witness trees and collisions]\label{def:pruned_witness}
Given a witness tree $T$ where nodes $v_1,\ldots,v_j$ in $T$ correspond to the same ball, let $v_1$ be the node among them in the most bottom level of $T$. Consider the following process: first remove $v_2,\ldots,v_j$ and their subtrees; then, redirect the edges of $v_2,\ldots,v_j$ from their parents to $v_1$ and call these edges \emph{collisions}. Given a symmetric witness tree $T$, we call the new tree without repeated nodes after the above process as the pruned witness tree of $T$.
\end{definition}
We call different witness trees with the same structure but different balls a \emph{configuration}. For example, the configuration of symmetric witness trees with \emph{distinct} nodes is a full $d$-ary tree without any collision.

\begin{figure}[h]
\centering
\includegraphics{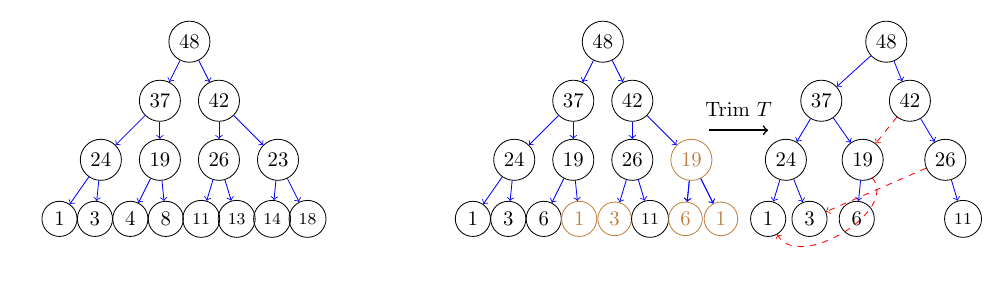}
\caption{A witness tree with distinct balls and a pruned witness tree with 3 collisions}
\label{fig1}
\end{figure}

Next we define the height and size of pruned witness trees.
\begin{definition}[Height of witness trees]
Given any witness tree $T$, let the height of $T$ be the length of the \emph{shortest} path from the root of $T$ to its leaves. Because $\text{height}(u)=\underset{v \in \text{children}(u)}{\min}\big\{\text{height}(v)\big\}+1$, the height of the pruned witness tree equals the height of the original witness tree. 

At the same time, let $|T|$ denote the number of vertices in $T$ for \emph{any} witness tree $T$ and $|C|$ denote the number of nodes in a configuration $C$.
\end{definition}
\begin{remark}
Given a ball $b$ of height $h$ and any $h'<h$, we always consider the pruned witness tree of $b$ with height $h'$ whose leaves have height at least $h-h'$.
\end{remark}

Finally we review the argument of V\"{o}cking \cite{Vocking} for $m=n$ balls. One difference between this proof and V\"{o}cking's original \cite{Vocking} proof is an alternate argument for the case of witness trees with many collisions.
\begin{lemma}[\cite{Vocking}]\label{lem:review_vocking}
For any constants $c \ge 2$ and $d$, with probability at least $1-n^{-c}$, the max-load of the \emph{Always-Go-Left} scheme with $d$ independent choices from perfectly random hash functions is $\frac{\log \log n}{\log d} + O(1)$.
\end{lemma}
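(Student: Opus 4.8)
The plan is to bound the probability that any ball reaches height $h = \frac{\log\log n}{\log d} + O(1)$ by a union bound over all configurations of pruned witness trees of that height, showing the total is at most $n^{-c}$. First I would observe that if some ball $b$ has height $h$, then there is a full $d$-ary witness tree of height $h-1$ rooted at $b$, because every ball at height $\ge 1$ was placed on top of some ball in each of its $d$ bins, and each of those balls had height at least (height of $b$) $-1$ at the moment of placement; unrolling this $h-1$ levels gives the tree. Pruning it yields a pruned witness tree whose underlying configuration $C$ has some number $s$ of collisions. So it suffices to bound, for each configuration $C$, the probability that $C$ is "realized" by some assignment of distinct balls to its non-collision nodes and some assignment of hash values consistent with all the parent-child edges and collisions.

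The key estimate is the per-configuration bound. Fix a configuration $C$ with $t = |C|$ distinct nodes and $s$ collision edges. Summing over the $\le m^t$ ways to choose the balls, and using full independence of the $d$ hash functions on the at most $dt$ relevant evaluations: each of the $t + s$ edges (true edges plus collisions) forces one hash coordinate to land in a fixed set, contributing a factor $1/n$, while the only "free" choices are the $t$ balls. This gives roughly $m^t \cdot n^{-(t+s)} \cdot (\text{poly factors for which child slot / which of the } d \text{ functions})$. With $m = O(n)$ the $m^t/n^t$ cancels up to a constant per node, leaving a factor like $C_0^t \cdot n^{-s} \cdot n^{-1}$ for trees that are not too bushy — and the crucial point is that a full $d$-ary tree of height $h-1$ with no collisions has $t \approx \frac{d^{h}-1}{d-1}$ nodes, so the product of the $O(1)$-per-node factors is $\exp(O(d^h))$; choosing $h$ so that $d^{h} = \Theta(\log\log n)$, i.e. $h = \frac{\log\log n}{\log d} + O(1)$, makes this $\mathrm{poly}(\log n)$, which is killed by a single spare $1/n$ factor one gets from the root (or from summing the leaf heights appropriately). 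Configurations with collisions are even better: each collision contributes an extra $1/n$, and one must separately handle the "pathological" case of a pruned tree with very many collisions by the alternate argument the lemma statement alludes to — when $s$ is large (say $s \ge c'$), the $n^{-s}$ factor alone beats the $\mathrm{poly}(\log n)$ from the node count, uniformly over all such configurations, so that regime is dispatched in bulk.

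The steps in order: (1) show height $h$ implies an unpruned full $d$-ary witness tree of height $h-1$; (2) enumerate configurations of pruned witness trees of height $h-1$, stratified by number of collisions $s$; (3) for a fixed configuration, bound the realization probability by $m^t n^{-(t+s)}$ times combinatorial factors counting child-slots and which hash function, using that $d$ is constant and the needed hash evaluations number $O(t) = O(d^h) = O(\mathrm{polyloglog}\, n)$ so full independence on that many points is automatic for a perfectly random hash function; (4) sum over the $\exp(O(t))$ configurations with $s = O(1)$ collisions and set $h = \frac{\log\log n}{\log d}+O(1)$ so the sum is $n^{-c}$; (5) handle $s$ large by the collision-heavy alternate argument, bounding the whole tail by $\sum_{s \text{ large}} (\text{configs}) \cdot n^{-s}$. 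The main obstacle is step (3)–(4): getting the counting of configurations and the per-configuration bound tight enough that the product of the constant-factor-per-node terms is only $\exp(O(d^h))$ rather than something super-exponential in $t$, so that one extra $1/n$ suffices to absorb it — this is exactly where the choice of the constant in the $O(1)$ of the height comes from, and where V\"ocking's bookkeeping must be reproduced carefully; the collision-heavy case (step 5) is the secondary subtlety, since naively the number of collision-heavy configurations could itself be large, and one needs the $n^{-s}$ to dominate uniformly.
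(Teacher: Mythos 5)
There is a genuine gap, and it stems from a size miscalculation that the rest of your plan leans on. With $h=\frac{\log\log n}{\log d}+O(1)=\log_d\log n+O(1)$, the witness tree has $d^{h}=\Theta(\log n)$ nodes, not $\Theta(\log\log n)$ (indeed $d^{\log_d\log n}=\log n$). So the product of your ``constant per node'' factors in step (3) is $(C_0d)^{\Theta(\log n)}=n^{\Theta(1)}$, not $\mathrm{poly}(\log n)$, and cannot be absorbed by one spare factor of $1/n$. Worse, the spare factor goes the other way: a pruned tree with $t$ nodes and $s$ collisions has $t-1+s$ edges, so the balls-times-edges union bound is $m^{t}(d/n)^{t-1+s}\approx n\cdot (C_0d)^{t}\cdot n^{-s}$, leaving a spare factor of $n$, not $n^{-1}$. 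Hence for configurations with few (or zero) collisions your bound is polynomially \emph{large}, and no choice of the $O(1)$ in the height repairs it: the edge events alone never suffice. The missing ingredient is the leaf-height condition, which is where the paper (following V\"ocking) gets its decay: a ball of height $l+4$ is charged to a witness tree of height $l$ whose \emph{leaves all have height at least $4$}; since at any moment at most $n/3$ bins contain $\ge 3$ balls, each leaf independently costs a factor $3^{-d}$, and with $\Theta(|C|)$ leaves this contributes $(d^{O(1)}3^{-d})^{\Theta(|C|)}$, which beats $n\cdot d^{|C|}$ exactly because $|C|=\Omega(\log n)$ with a large enough constant --- that is what fixes the additive $O(1)$ in the height. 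Your parenthetical ``or from summing the leaf heights appropriately'' gestures at this, but no mechanism is given, so steps (3)--(4) fail as written.

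The collision-heavy case has the same problem in your version: you want $n^{-s}$ to beat ``the $\mathrm{poly}(\log n)$ from the node count,'' but the node count is $\Theta(\log n)$, so union-bounding over full configurations costs $n^{|C|}(d/n)^{|C|-1+s}=n^{1-s}\,d^{|C|+s-1}=n^{\Theta(1)}\cdot n^{-s}$, and a fixed small threshold $c'$ does not work uniformly. The paper avoids this by not instantiating the whole configuration: it extracts the subconfiguration $C'$ spanned by the first $3c$ collision edges and their ancestors, which has only $O(c\log\log n)$ nodes and $|C'|+3c-1$ edges, so $n^{|C'|}(d/n)^{|C'|+3c-1}\le n^{-(3c-1)}d^{O(c\log\log n)}\le n^{-2c}$, with only $\mathrm{poly}(\log n)$ such subconfigurations to union over. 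You would need either this truncation or a collision threshold growing with the (large) constant hidden in $d^{|C|}$ --- and in the latter case the low-collision regime still requires the leaf-height argument you have not supplied.
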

\begin{proof}
We fix a parameter $l=\lceil \log_d \big( (2+2c)\log n \big)+ 3c + 5 \rceil$ for the height of witness trees. In this proof, we bound the probability that any symmetric witness tree of height $l$ with leaves of height at least 4 exists in perfectly random hash functions. From the definition of witness trees, this also bounds the probability of a ball with height $l+4$ in the $d$-choice \emph{Uniform-Greedy} scheme.

For symmetric witness trees of height $l$, it is sufficient to bound the probability that their pruned counterparts appear in perfectly random hash functions. We separate all pruned witness trees into two cases according to the number of edge collisions: pruned witness trees with at most $3c$ collisions and pruned witness trees with at least $3c$ collisions.

\paragraph{Pruned witness trees with at most $3c$ collisions.} Let us fix a configuration $C$ with at most $3c$ collisions and consider the probability any pruned witness trees with configuration $C$ appears in perfectly random hash functions. Because each node of this configuration $C$ corresponds a distinct ball, there are at most $n^{|C|}$ possible ways to instantiate balls into $C$.

Next, we fix one possible pruned witness tree $T$ and bound the probability of the appearance of $T$ in $h^{(1)},\ldots,h^{(d)}$. We consider the probability of two events: every edge $(u,v)$ in the tree $T$ appears during the allocation process; and every leaf of $T$ has height at least $4$. For the first event, an edge $(u,v)$ holds during the process only if the hash functions satisfy
\begin{equation}\label{eqn:hold}
h^{(i)}(u)\in \left \{h^{(1)}(v),\ldots,h^{(d)}(v)\right\}, \text{ which happens with probability at most } \frac{d}{n}.
\end{equation}

Secondly, the probability that a fixed leaf ball has height at least $4$ is at most $3^{-d}$. A leaf ball of height 4 indicates that each bin in his choices has height at least $3$. Because at most $n/3$ bins contain at least 3 balls at any moment, the probability that a random bin has height at least 3 is $\le 1/3$. Thus the probability that $d$ random bins have height $3$ is at most $3^{-d}$.

We apply a union bound on the probability that any witness tree with the configuration $C$ appears in perfectly random hash functions:
\begin{equation}\label{eq:vocking_at_most_collisions}
n^{|C|} \cdot \prod_{(u,v) \in C} \frac{d}{n} \cdot (3^{-d})^{\text{number of leaves}}
\end{equation}
We lower bound the number of edges in $C$ by $|C|-1$ because $C$ is connected. Next we lower bound the number of leaves. Because $C$ is a $d$-ary tree with at most $3c$ collisions, the number of leaves is at least $\frac{|C|-3c}{2}$. At the same time, $C$ is trimmed from the $d$-ary symmetric witness tree of height $l$. Thus $|C| \ge (1+d+\cdots+d^{l-3c})$. From all discussion above, we bound \eqref{eq:vocking_at_most_collisions} by
\[
n^{|C|} \cdot (\frac{d}{n})^{|C|-1} \cdot (3^{-d})^{\frac{|C|-3c}{2}} \le n \cdot (d^{2.5} \cdot 3^{-d})^{|C|/2.5} \le n \cdot (d^{2.5} \cdot 3^{-d})^{d^{l-3c}/2.5} \le n \cdot (0.8)^{10(2+2c) \log n} \le n^{-2c-1}.
\]

Finally, we apply a union bound on all possible configurations with at most $3c$ collisions: the number of configurations is at most $\sum_{i=0}^{3c} (d^{l+1})^{2 \cdot i} \le n$ such that the probability of any witness tree with height $l$ and at most $3c$ collisions existing is at most $n^{-c}$.

\paragraph{Pruned witness trees with at least $3c$ collisions.} We use the extra $3c$ collisions with equation \eqref{eqn:hold} instead of the number of leaves in this case.

Given any configuration $C$ with at least $3c$ collisions, we consider the first $3c$ collisions $e_1,\ldots,e_{3c}$ in the BFS of $C$. Let $C'$ be the induced subgraph of $C$ that only contains nodes in $e_1,\ldots,e_{3c}$ and their ancestors in $C$. At the same time, the size $|C'| \le 3c(2l+1)$ and the number of edges in $C'$ is $|C'|+3c-1$.
\begin{figure}[h]
\centering
\includegraphics{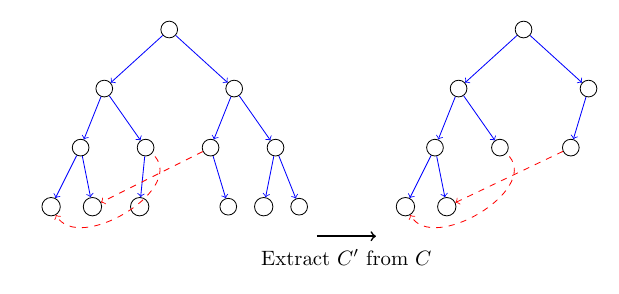}
\caption{An example of extracting $C'$ from $C$ given two collisions.}
\label{fig1}
\end{figure}

Because any pruned witness tree of $C$ exists only if its corresponding counterpart of $C'$ exists in perfectly random hash functions, it is suffice to bound the probability of the latter event. There are at most $n^{|C'|}$ instantiations of balls in $C'$. For each instantiation, we bound the probability that all edges survive by \eqref{eqn:hold}:
\[ (\frac{d}{n})^{\text{number of edges}}=(\frac{d}{n})^{|C'|+3c-1}. \] 

We bound the probability that any pruned witness tree of configuration $C'$ survives in the perfectly random hash function by
\[
(\frac{d}{n})^{|C'|+3c-1} \cdot n^{|C'|} \le (\frac{1}{n})^{3c-1} \cdot d^{(2l+2) \cdot 3c} \le n^{-2c}.
\]
Finally, we apply a union bound over all possible configurations $C'$: there are at most $(1+d+\cdots+d^l)^{2\cdot 3c} \le n$ configurations of $3c$ collisions.
\end{proof}
\begin{remark}
Because the sizes of all witness trees are bounded by $d^{l+1}=O(\log n)$, $O_{c,d}(\log n)$-wise independent hash functions could adopt the above argument to prove a max-load of $\log_d \log n + O(d+c)$.
\end{remark}

%%%%%%%%%%%%%%%%%%%%%%%%%%%%%%%%%%%%%%%%%%%%%%%%%%%%%%%%%%%%%%%%%%

\section{Hash functions}\label{sec:hash_func}
We construct our hash family  and show its properties for the derandomization of witness trees argument in this section. We sketch the derandomization of Lemma~\ref{lem:review_vocking} of Vocking's argument in Section~\ref{sec:proof_sketch}.

Let $\circ$ denote the concatenation operation and $\oplus$ denote the bit-wise XOR operation.
\begin{construction}\label{def:hash}
Given $\delta_1>0, \delta_2>0$, and two integers $k,k_g$, let
\begin{enumerate}
\item $h_i: U \rightarrow [n^{2^{-i}}]$ denote a function generated by an $O(\log^2 n)$-wise $\delta_1$-biased space for each $i \in [k]$,
\item $h_{k+1}: U \rightarrow [n^{2^{-k}}]$ denote a function generated by an $O(\log^2 n)$-wise $\delta_2$-biased space such that $(h_1(x) \circ h_2(x) \circ \cdots \circ h_k(x) \circ h_{k+1}(x) )$ is a function by $U$ to $[n]$,
\item $g:U \rightarrow [n]$ denote a function from a $k_g$-wise independent family from $U$ to $[n]$.
\end{enumerate}
We define a random function $h:U \rightarrow [n]$ in our hash family $\mathcal{H}$ with parameters $\delta_1,\delta_2,k$ and $k_g$ to be:
\[ h(x) = \big(h_1(x) \circ h_2(x) \circ \cdots \circ h_k(x) \circ h_{k+1}(x) \big) \oplus g(x).\]
\end{construction}
Hence the seed length of our hash family is $O(k \log \frac{n \cdot \log^2 n \cdot \log |U|}{\delta_1} + \log \frac{n \cdot \log^2 n \cdot \log |U|}{\delta_2}+k_g \log n)$. We always choose $k \le \log \log n, k_g = O(\log \log n), \delta_1=1/\poly(n), $ and $\delta_2=(\log n)^{-O(\log n)}$ such that the seed length is $O(\log n \log \log n)$.

\begin{remark}
Our parameters of $h_1 \circ \cdots \circ h_{k+1}$ are stronger than the parameters in~\cite{CRSW}. While the last function $h_{k+1}$ of \cite{CRSW} is still a $\delta_1$-biased space, we use $\delta_2=(\delta_1)^{O(k)}$ in $h_{k+1}$ to provide almost $O(\log n)$-wise independence on $(\log n)^{O(\log n)}$ subsets of size $O(\log n)$ for our calculations.
\end{remark}

\paragraph{Properties of $h$.} We state the properties of $h$ that will be used in the derandomization. Because of the $k_g$-wise independence in $g$ and the $\oplus$ operation, we have the same property for $h$.
\begin{property}\label{clm:loglogn_wise}
$h$ is $k_g$-wise independent.
\end{property}

Then we fix $g$ and discuss $h_1 \circ \cdots \circ h_k \circ h_{k+1}$. For each $i \in [k]$, it is natural to think of $h_1 \circ \cdots \circ h_i$ as a function from $U$ to $[n^{1-\frac{1}{2^i}}]$, i.e., a hash function maps all balls into $n^{1-\frac{1}{2^i}}$ bins. Celis et al. \cite{CRSW} showed that for every $i \in [k]$, the number of balls in every bin of $h_1 \circ \cdots \circ h_i$ is close to its expectation $n^{\frac{1}{2^i}} \cdot \frac{m}{n}$ in $\frac{1}{\poly(n)}$-biased spaces. 
\begin{lemma}[\cite{CRSW}]\label{lem:uniform_distributed}
Given $k=\log_2 (\log n/3\log \log n)$ and $\beta=(\log n)^{-0.2}$, for any constant $c>0$, there exists $\delta_1=1/\poly(n)$ such that given $m=O(n)$ balls, with probability at least $1-n^{-c}$, for all $i \in [k]$, every bin in $[n^{1-\frac{1}{2^i}}]$ contains at most $(1+\beta)^i n^{\frac{1}{2^i}} \cdot \frac{m}{n}$ balls under $h_1 \circ \cdots \circ h_i$.
\end{lemma}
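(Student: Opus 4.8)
The plan is to prove the statement by induction on $i$, showing that a union bound over all $i\in[k]$ and all bins in $[n^{1-1/2^i}]$ succeeds, where the single-bin bound comes from the $k$-wise independence Chernoff inequality (Lemma~\ref{lem:k_wise_independence_chernoff}) applied inside the small-biased space. First I would recall that $h_1\circ\cdots\circ h_i$ maps $U$ into $[n^{1-1/2^i}]$, so a fixed bin $b\in[n^{1-1/2^i}]$ has expected load $\mu_i = m\cdot n^{-(1-1/2^i)} = n^{1/2^i}\cdot\tfrac{m}{n}$ under a perfectly random hash function. The key point is that the indicator variables ``ball $x$ lands in bin $b$'' are determined by $h_1(x)\circ\cdots\circ h_i(x)$, and since each $h_j$ comes from an $O(\log^2 n)$-wise $\delta_1$-biased space, the joint map $h_1\circ\cdots\circ h_i$ restricted to any $\poly(\log n)$ balls is $\delta_1\cdot n^{\poly(\log n)/2}$-close to uniform by Lemma~\ref{lem:small_biased_space}; choosing $\delta_1 = 1/\poly(n)$ small enough (the polynomial depending on $c$ and the independence parameter) makes this statistical error negligible, i.e.\ $n^{-\omega(1)}$.

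Next I would bound, for a fixed level $i$ and a fixed bin $b$, the probability that the load of $b$ exceeds $(1+\beta)^i\mu_i$. Conditioning on the event from level $i-1$ (inductively, every bin at level $i-1$ has at most $(1+\beta)^{i-1}n^{1/2^{i-1}}\tfrac{m}{n}$ balls), the balls feeding into bin $b$ at level $i$ are those in some bin $b'$ at level $i-1$ that $h_i$ then routes into $b$; the number of such balls is at most $(1+\beta)^{i-1}n^{1/2^{i-1}}\tfrac{m}{n}$, and each goes to $b$ with probability $n^{-(1/2^{i-1}-1/2^i)} = n^{-1/2^i}$ under (near-)uniform $h_i$. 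So the conditional expectation is at most $(1+\beta)^{i-1}n^{1/2^i}\tfrac{m}{n} = (1+\beta)^{i-1}\mu_i$, and I want the load to stay below $(1+\beta)\cdot(1+\beta)^{i-1}\mu_i$, i.e.\ a multiplicative $(1+\beta)$ deviation. Applying Lemma~\ref{lem:k_wise_independence_chernoff} with deviation $\beta = (\log n)^{-0.2}$ and $k = \Theta(\log^2 n)$-wise independence gives a failure probability at most $\exp(-\Omega(\beta^2\mu_i))$; the condition $k\ge \beta^2\mu_i\cdot e^{-1/3}$ must be checked, using $\mu_i \le n^{1/2^i}\cdot O(1)$ and $i\le k = \log_2(\log n/3\log\log n)$, so $\mu_k \le 2^{3\log\log n}\cdot O(1) = O(\log^3 n)$ and hence $\beta^2\mu_i = O(\log^{2.6} n) \le O(\log^2 n) \le k$ — actually one needs $k$ slightly larger, $\Theta(\log^{2.6}n)$ or just take the $O(\log^2 n)$-wise parameter generously, so I'd set the independence parameter to whatever $\poly(\log n)$ dominates $\beta^2\mu_k$. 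Since $\mu_i = n^{1/2^i}\tfrac{m}{n} \ge n^{1/2^k}\cdot\Omega(1) \ge (\log n/\log\log n)^{?}$... more carefully $n^{1/2^k} = n^{3\log\log n/\log n} = 2^{3\log\log n} = \log^3 n$, so $\beta^2\mu_i \ge (\log n)^{-0.4}\cdot\log^3 n = \log^{2.6}n \to \infty$, giving failure probability $\exp(-\Omega(\log^{2.6}n)) = n^{-\omega(1)}$ per bin.

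Finally I would union-bound: there are at most $k\cdot n = O(n\log\log n)$ pairs (level $i$, bin $b$), each contributing failure probability $n^{-\omega(1)}$ plus the statistical-distance slack $\delta_1\cdot n^{\poly(\log n)/2} = n^{-\omega(1)}$, so the total failure probability is at most $n^{-c}$ for the prescribed choice of $\delta_1 = 1/\poly(n)$, and on the complementary event the bound $(1+\beta)^i n^{1/2^i}\tfrac{m}{n}$ holds for all $i\in[k]$ simultaneously by the induction. The main obstacle I anticipate is the bookkeeping around the conditioning: the level-$(i-1)$ event is not independent of the randomness of $h_i$, but since the events from levels $1,\dots,i-1$ depend only on $h_1,\dots,h_{i-1}$ while the level-$i$ deviation for a fixed bin depends (given the level-$(i-1)$ loads) only on $h_i$, one can fix any outcome of $h_1,\dots,h_{i-1}$ in the good event and then apply the Chernoff bound over $h_i$ alone; the other delicate point is ensuring the $O(\log^2 n)$-wise independence parameter is genuinely large enough to satisfy the hypothesis $k \ge \beta^2\mu\cdot e^{-1/3}$ of Lemma~\ref{lem:k_wise_independence_chernoff} at the top level, which is why the construction uses a $\poly(\log n)$-wise space rather than a constant-wise one.
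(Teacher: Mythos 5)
Your induction skeleton and the way you handle the conditioning (fix a good outcome of $h_1,\ldots,h_{i-1}$, then work over $h_i$ alone) match the paper's proof in Appendix~A, but the core probabilistic step has a genuine gap. The load of a fixed bin at level $i$ is a sum of indicators over \emph{all} the balls that landed in its parent bin, i.e.\ over $s \approx (1+\beta)^{i-1} n^{1/2^{i-1}}\cdot\frac{m}{n}$ balls --- for $i=1$ this is $\Theta(m)=\Theta(n)$ balls, not $\poly(\log n)$ of them. So your plan to absorb the bias by saying ``$h_1\circ\cdots\circ h_i$ restricted to any $\poly(\log n)$ balls is $\delta_1 n^{\poly(\log n)/2}$-close to uniform'' does not apply to the event you need: replacing the $\delta_1$-biased space by the uniform distribution on the $s$ relevant balls via statistical distance costs $\delta_1\cdot n^{\Omega(s)}$, which is astronomically large. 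Equivalently, if you push the bias through a $k$-th moment expansion with a fixed $k=\Theta(\log^2 n)$ at every level, the error is on the order of $\delta_1\cdot s^{O(k)} = \delta_1\cdot n^{\Theta(\log^2 n)}$ at the lower levels, which no $\delta_1 = 1/\poly(n)$ can beat (it would force $\delta_1 = n^{-\Theta(\log^2 n)}$ and hence $\Theta(\log^3 n)$ seed bits, destroying the whole point of the construction). A second, related problem: you verify the hypothesis $k \ge \beta^2\mu\, e^{-1/3}$ of Lemma~\ref{lem:k_wise_independence_chernoff} only at the top level, where $\mu_k = O(\log^3 n)$; at lower levels $\mu_i = n^{1/2^i}\cdot\frac{m}{n}$ is polynomially large (e.g.\ $\approx\sqrt{n}$ at $i=1$), so $\beta^2\mu_i$ is polynomial in $n$ and the quoted Chernoff bound simply does not apply with any $\poly(\log n)$-wise independence.

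The paper avoids both issues by not invoking Lemma~\ref{lem:k_wise_independence_chernoff} at all. Instead it runs a direct $b$-th moment (Markov) argument for the centered load of a fixed bin, with a \emph{level-dependent} even moment order $b = \Theta(2^i)$ (constant times $2^i$, so $b = O(\log n)$ only at the last level). Because $s \approx n^{1/2^{i-1}}\cdot O(\frac{m}{n})$, this tuning keeps $s^{O(b)} = n^{O(1)}$ with an exponent independent of $i$, so the deviation of the biased-space moment from the uniform moment is bounded by $\delta_1\cdot s^{3b} \le \delta_1\cdot n^{O(1)}$ and a single fixed $\delta_1 = 1/\poly(n)$ suffices; simultaneously the uniform $b$-th moment term gives a per-bin failure probability $n^{-c-2}$, and the lemma follows by the same union bound over levels and bins you describe. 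If you want to salvage your write-up, you should replace the appeal to Lemma~\ref{lem:k_wise_independence_chernoff} plus statistical distance by this moment computation with the growing moment order $b=\Theta(2^i)$; that is the one idea your proposal is missing.
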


For completeness, we provide a proof of Lemma~\ref{lem:uniform_distributed} in Appendix~\ref{app:proof}. In this work, we use the following version that after fixing $g$ in the Construction~\ref{def:hash}, $h_1 \circ h_2 \circ \cdots \circ h_k$ still allocates the balls evenly.
\begin{corollary}\label{cor:last_hash}
For any constant $c>0$, there exists $\delta_1=1/\poly(n)$ such that given $m=O(n)$ balls and any function $g_0:U \rightarrow [n/\log^3 n]$, with probability at least $1-n^{-c}$ over $h_1,\ldots,h_k$, for any bin $j \in [n^{1-\frac{1}{2^k}}]=[n/\log^3 n]$, it contains at most $1.01 \cdot \log^3 n \cdot \frac{m}{n}$ balls in the hash function $\big(h_1(x) \circ \cdots \circ h_k(x) \big) \oplus g_0(x)$.
\end{corollary}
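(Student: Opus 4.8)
The statement will follow from Lemma~\ref{lem:uniform_distributed} with $i=k$, once we observe that XOR-ing $h_1\circ\cdots\circ h_k$ by an arbitrary fixed $g_0$ produces a hash function of exactly the same distributional type. Since the bit-length of $g_0(x)\in[n/\log^3 n]=[n^{1-1/2^k}]$ equals $\log n-3\log\log n=\sum_{i=1}^k 2^{-i}\log n$, which is the total bit-length of $h_1(x)\circ\cdots\circ h_k(x)$, we may split $g_0(x)=g_0^{(1)}(x)\circ\cdots\circ g_0^{(k)}(x)$ into blocks with $g_0^{(i)}\colon U\to[n^{2^{-i}}]$ matching the range of $h_i$. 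Because bitwise XOR respects this block decomposition,
\[
h'(x):=\big(h_1(x)\circ\cdots\circ h_k(x)\big)\oplus g_0(x)=h_1'(x)\circ\cdots\circ h_k'(x),\qquad h_i'(x):=h_i(x)\oplus g_0^{(i)}(x).
\]

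Next I would check that each $h_i'$ is drawn from an $O(\log^2 n)$-wise $\delta_1$-biased space for the \emph{same} $\delta_1$ used in Construction~\ref{def:hash}. Identifying the alphabet $[n^{2^{-i}}]$ with the abelian group it carries (so that $\oplus$ is group addition), $h_i'$ is a fixed translate of $h_i$, and for any nontrivial character $\chi_\alpha$ we have $\E_z[\chi_\alpha(z\oplus v)]=\chi_\alpha(v)\cdot\E_z[\chi_\alpha(z)]$, so $|\E_z[\chi_\alpha(z\oplus v)]|=|\E_z[\chi_\alpha(z)]|\le\delta_1$ while the support of $\alpha$ is unchanged; hence the translated distribution is again $O(\log^2 n)$-wise $\delta_1$-biased. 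Moreover, since $h_1,\dots,h_k$ are mutually independent in Construction~\ref{def:hash} and the $g_0^{(i)}$ are fixed, $h_1',\dots,h_k'$ are mutually independent. Thus $(h_1',\dots,h_k')$ satisfies exactly the hypotheses Lemma~\ref{lem:uniform_distributed} imposes on $(h_1,\dots,h_k)$.

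Applying Lemma~\ref{lem:uniform_distributed} to $(h_1',\dots,h_k')$ with $i=k$: there is $\delta_1=1/\poly(n)$ so that with probability at least $1-n^{-c}$ over $h_1,\dots,h_k$, every bin of $h_1'\circ\cdots\circ h_k'=h'$ in $[n^{1-1/2^k}]=[n/\log^3 n]$ holds at most $(1+\beta)^k\, n^{1/2^k}\cdot\frac{m}{n}=(1+\beta)^k\log^3 n\cdot\frac{m}{n}$ balls. Finally, with $\beta=(\log n)^{-0.2}$ and $k\le\log_2\log n$ one has $(1+\beta)^k\le e^{\beta k}=e^{o(1)}\le 1.01$ for all sufficiently large $n$, giving the claimed bound. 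The only point needing real care is the bookkeeping in the first step --- lining up the radices and bit-lengths of $g_0$ with the block structure of $h_1\circ\cdots\circ h_k$, and checking that the XOR is literally an additive shift of the alphabet --- after which everything reduces verbatim to Lemma~\ref{lem:uniform_distributed} with no new probabilistic content.
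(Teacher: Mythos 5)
Your proposal is correct and follows exactly the route the paper intends: the corollary is stated as a direct consequence of Lemma~\ref{lem:uniform_distributed}, and your observation that XOR by the fixed vector $\big(g_0(x)\big)_{x\in U}$ only multiplies each character expectation by a phase of modulus one (so each shifted block $h_i'$ remains an $O(\log^2 n)$-wise $\delta_1$-biased space with the same support structure, and the appendix's moment computation applies verbatim) is precisely the missing bookkeeping, together with the bound $(1+\beta)^k\le e^{\beta k}\le 1.01$. No gaps; this matches the paper's argument, just spelled out in more detail than the paper provides.
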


Next we discuss the last function $h_{k+1}$ generated from a $\delta_2$-biased space on $[\log^3 n]^U$. For a subset $S \subseteq U$, let $h(S)$ denote the distribution of a random function $h$ on $S$ and $U_{[\log^3 n]}(S)$ denote the uniform distribution over all maps from $S \rightarrow [\log^3 n]$. From Lemma~\ref{lem:small_biased_space} and the union bound, we have the following claim.
\begin{claim}\label{clm:SD_subsets}
Given $\delta_2=(\log n)^{-C \log n}$, for a fixed subset $S$ of size $\frac{C}{3} \cdot \log n$, $h_{k+1}(S)$ is $(\log n)^{- \frac{C}{2} \cdot \log n}$-close to the uniform distribution on $S$, i.e., $\|h_{k+1}(S)-U_{[\log^3 n]}(S)\|_{1} \le (\log n)^{- \frac{C}{2} \cdot \log n}$.

Then for $m=(\log n)^{\frac{C}{3} \cdot \log n}$ subsets $S_1,\ldots,S_m$ of size $\frac{C}{3} \cdot \log n$, we have $$\sum_{i \in [m]} \|h_{k+1}(S_i)-U_{[\log^3 n]}(S_i)\|_{1} \le m \cdot (\log n)^{- \frac{C}{2} \cdot \log n} \le (\log n)^{- \frac{C}{6} \cdot \log n}.$$
\end{claim}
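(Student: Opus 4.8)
The plan is to combine the second part of Lemma~\ref{lem:small_biased_space} (Vazirani's bound for $k$-wise $\delta$-biased spaces) with the triangle inequality; no new idea is needed beyond careful bookkeeping of the exponents, and the whole content is that the choice $\delta_2=(\log n)^{-C\log n}$ is small enough to beat the $p^{|S|/2}$ blow-up in Vazirani's bound.

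First I would record the parameters. Since $k=\log_2(\log n/3\log\log n)$ we have $n^{2^{-k}}=\log^3 n$, so $h_{k+1}$ is generated by an $O(\log^2 n)$-wise $\delta_2$-biased space on $\mathbb{F}_p^{U}$ with $p=\log^3 n$ (a prime power, since $n$ and hence $\log^3 n$ is a power of $2$) and $\delta_2=(\log n)^{-C\log n}$. For a fixed $S\subseteq U$ with $|S|=\frac{C}{3}\log n$, observe that $\frac{C}{3}\log n\le O(\log^2 n)$ for any constant $C$ and large $n$, so the restriction $h_{k+1}(S)$ is a genuine $\delta_2$-biased distribution on $\mathbb{F}_p^{S}$ (every character on $S$ has support at most $|S|$, which is within the independence level of the space).

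Next I would invoke Lemma~\ref{lem:small_biased_space}: a $\delta_2$-biased space on $\mathbb{F}_p^{S}$ with $|S|=\frac{C}{3}\log n$ is within $\delta_2\cdot p^{|S|/2}$ of uniform on $S$ in statistical distance. Plugging in and using $(\log^3 n)^{\frac{C}{6}\log n}=(\log n)^{\frac{C}{2}\log n}$,
\[
\delta_2\cdot p^{|S|/2}=(\log n)^{-C\log n}\cdot(\log^3 n)^{\frac{C}{6}\log n}=(\log n)^{-C\log n}\cdot(\log n)^{\frac{C}{2}\log n}=(\log n)^{-\frac{C}{2}\log n},
\]
which is the first inequality. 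The second inequality then follows immediately from the triangle inequality applied subset by subset: summing the first bound over the $m=(\log n)^{\frac{C}{3}\log n}$ subsets $S_1,\dots,S_m$ gives
\[
\sum_{i\in[m]}\|h_{k+1}(S_i)-U_{[\log^3 n]}(S_i)\|_{1}\le m\cdot(\log n)^{-\frac{C}{2}\log n}=(\log n)^{\left(\frac{C}{3}-\frac{C}{2}\right)\log n}=(\log n)^{-\frac{C}{6}\log n}.
\]

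There is no real obstacle here; the only points requiring a moment of care are (i) verifying that the independence parameter $O(\log^2 n)$ of the $k$-wise small-biased space used for $h_{k+1}$ comfortably exceeds the set sizes $\frac{C}{3}\log n$ to which we apply it, and (ii) noting that $p=\log^3 n$ is a prime power so that Lemma~\ref{lem:small_biased_space} applies verbatim. Both are guaranteed by the standing assumptions that $n$ is a power of $2$ and $C$ is a constant.
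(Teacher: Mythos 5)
Your proposal is correct and follows essentially the same route as the paper: apply the second part of Lemma~\ref{lem:small_biased_space} to each subset of size $\frac{C}{3}\log n$ with $p=\log^3 n$ and $\delta_2=(\log n)^{-C\log n}$, then sum (union bound) over the $m=(\log n)^{\frac{C}{3}\log n}$ subsets. The only blemish is the parenthetical claim that $\log^3 n$ is a power of $2$ because $n$ is --- that implication is false in general, but it is immaterial here since the paper treats such rounding issues informally and the exponent bookkeeping, which is the actual content, is exactly right.
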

In another word, $h_{k+1}$ is close to the uniform distribution on \emph{$(\log n)^{\frac{C}{3} \log n}$ subsets of size $\frac{C}{3} \log n$}. However, $h_{k+1}$ (or $h$) is not close to $\log n$-wise independence on $n$ balls.

%A toy example is to choose $\delta_2= (\log n)^{-10 \log n}$. Notice that under $h_1 \circ \cdots \circ h_k$, there are at most $(1+\beta)^k n^{\frac{1}{2^k}} \le 1.01 \cdot \log^3 n$ balls in each bin. Hence for any $\le 1.01 \log^3 n$ balls in one bin, $h_{k+1}$ is close to a $\log n$-wise independent distribution on the balls in a fixed bin. The distance between $h_{k+1}$ and $\log n$-wise independence over all $[n^{1-\frac{1}{2^k}}]$ bins is at most
%\[ n^{1-\frac{1}{2^k}} \cdot {1.01 \log^3 n \choose \log n} \cdot \delta_2 \cdot (\log^3 n)^{\log n/2}<\log n^{-2\log n}.\]

\begin{remark}[Evaluation time]
Our hash function has an evaluation time $O((\log \log n)^4)$ in the RAM model. Because we use $(\log n)^{-O(\log \log n)}$-biased spaces in $h_{k+1}$, we lose a factor of $O(\log \log n)^2$ compared to the hash family of \cite{CRSW}. The reason is as follows.

$g$ can be evaluated by a degree $O(\log \log n)$ polynomial in the Galois field of size $\poly(n)$, which takes $O(\log \log n)$ time. The first $k$ hash functions $h_1,\ldots,h_k$ use $1/\poly(n)$-biased spaces, which have total evaluation time $O(k \cdot \log \log n)=O(\log \log n)^2$ in the RAM model from \cite{MRRR}.

The last function $h_{k+1}$ in the RAM model is a $O(\log n)$-wise $n^{-O(\log \log n)}$-biased space from $U$ to $[\log^3 n]$, which needs $O(\log \log n)$ words in the RAM model. Thus the evaluation time becomes $O(\log \log n)$ times the cost of a quadratic operation in the Galois field of size $n^{O(\log \log n)}$, which is $O((\log \log n)^4)$.
\end{remark}

\subsection{Proof Overview}\label{sec:proof_sketch}
We sketch the derandomization of Lemma~\ref{lem:review_vocking} in this section. Similar to the proof of Lemma~\ref{lem:review_vocking}, we bound the probability that any pruned witness tree of height $l=\log_d \log n + O(1)$ exists in $h^{(1)},\ldots,h^{(d)}$, where each $h^{(i)}=\big(h^{(i)}_1(x) \circ \cdots \circ h^{(i)}_{k+1}(x) \big) \oplus g^{(i)}(x)$. We use the property of $h_1 \circ \cdots \circ h_{k+1}$ to derandomize the case of pruned witness trees with at most $3c$ collisions and the property of $g$ to derandomize the other case. 

\paragraph{Pruned witness trees with at most $3c$ collisions.} We show how to derandomize the union bound \eqref{eq:vocking_at_most_collisions} for a fixed configuration $C$ with at most $3c$ collisions. There are two probabilities in \eqref{eq:vocking_at_most_collisions}: the second term $\prod_{(u,v) \in C} \frac{d}{n}$ over all edges in $C$ and the last term $3^{-d \cdot \text{number of leaves}}$ over all leaves. We focus on the first term $\prod_{(u,v) \in C} \frac{d}{n}$ in this discussion, because it contributes a smaller probability. Since $|C| \in [d^{l-3c}, d^{l+1}] = \Theta(\log n)$, it needs $O(\log n)$-wise independence over $[n]$ bins for every possible witness trees in \eqref{eq:vocking_at_most_collisions}, which is impossible to support with $o(\log^2 n)$ bits \cite{Stinson94b}.

We omit $\{g^{(1)},\ldots,g^{(d)}\}$ and focus on the other part $\big\{h^{(i)}_1 \circ \cdots \circ h^{(i)}_k \circ h^{(i)}_{k+1}\big| i \in [d]\big\}$ in this case. Our strategy is to first fix the prefixes in the $d$ hash functions, $\big\{h^{(i)}_1 \circ \cdots \circ h^{(i)}_k\big| i \in [d]\big\}$, then recalculate \eqref{eq:vocking_at_most_collisions} using the suffixes $h^{(1)}_{k+1},\ldots,h^{(d)}_{k+1}$. Let $T$ be a possible witness tree in the configuration $C$. To satisfy \eqref{eqn:hold} for an edge $(u,v)$ in $T$, the prefixes of $h^{(1)}(v),\ldots,h^{(d)}(v)$ and $h^{(i)}(u)$ must satisfy
\begin{equation}\label{eq:precondition}
h^{(i)}_1(u)  \circ \cdots \circ h^{(i)}_k (u) \in \left\{h^{(1)}_1(v)  \circ \cdots \circ h^{(1)}_k(v),\ldots, h^{(d)}_1(v) \circ \cdots \circ h^{(d)}_k(v)\right\}.
\end{equation}

After fixing the prefixes, let $\mathcal{F}_T$ denote the subset of possible witness trees in the configuration $C$ that satisfy the prefix condition~\eqref{eq:precondition} for every edge. Because each bin of $[n/\log^3 n]$ receives at most $1.01 \log^3 n$ balls from every prefix function $h_1^{(j)} \circ h_2^{(j)} \circ \cdots \circ h_k^{(j)}$ by Corollary~\ref{cor:last_hash}, we could bound
$$|\mathcal{F}_T| \le n (d \cdot 1.01  \log^3 n)^{|C|-1}=n \cdot (1.01 d)^{|C|} \cdot (\log^3 n)^{|C|-1}=(\log n)^{O(\log n)}$$
instead of $n^{|C|}$ in the original argument.

Now we consider all possible witness trees in $\mathcal{F}_T$ under the suffixes $h^{(1)}_{k+1},\ldots,h^{(d)}_{k+1}$. We could treat $h^{(1)}_{k+1},\ldots,h^{(d)}_{k+1}$ as $O(\log n)$-wise independent functions for all possible witness trees in $\mathcal{F}_T$ from Claim~\ref{clm:SD_subsets}, because $|C|=O(\log n)$ and $|\mathcal{F}_T|=(\log n)^{O(\log n)}$. In the next step, we use $O(\log n)$-wise independence to rewrite \eqref{eq:vocking_at_most_collisions} and finish the proof of this case.

%We first assume $O(\log n)$-wise independence in $h_{k+1}$ and calculate the probability that these suffixes satisfy \eqref{eq:h} for every edge in a fixed possible witness tree among $\mathcal{F}_T$ \big(namely the second term in \eqref{eq:union_bound}\big) as $\prod_{(u,v) \in T} \frac{d}{\log^3 n}=(\frac{d}{\log^3 n})^{|T|-1}$. Thus the product of the first term and the second term in \eqref{eq:union_bound} becomes
%$$|\mathcal{F}_T| \cdot (\frac{d}{\log^3 n})^{|T|-1} \le n \cdot (1.01 d^2)^{|T|}.$$

%For the last term in \eqref{eq:union_bound}, we choose another constant $C>3$ and bound the probability that all leaf balls have height at least $C$ by $(2^{-3d})^{\text{number of leaves}}$ assuming $O(\log n)$-wise independence in $h_{k+1}$, such that the product of the union bound is $1/\poly(n)$ small given $|T|=\Theta(\log n)$.

%Finally, we replace the $O(\log n)$-wise independence of $h_{k+1}$ in the above assumption by a $(\log n)^{-C \log n}$-biased space. Notice that each possible witness tree in $\mathcal{F}_T$ has at most $d^{l+1}=O(\log n)$ nodes. We apply Claim~\ref{clm:SD_subsets} to each possible witness tree in $\mathcal{F}_T$ and use the bound $|\mathcal{F}_T|=(\log n)^{O(\log n)}$. Hence we lose an extra $(\log n)^{-\Omega(\log n)}$ in the union bound over $\mathcal{F}_T$ in the $(\log n)^{-C \log n}$-biased space for a large constant $C$.

\paragraph{Pruned witness trees with at least $3c$ collisions.} In our alternate argument of this case in Lemma~\ref{lem:review_vocking}, the subconfiguration $C'$ of $C$ has at most $3c \cdot (2l+1)$ nodes and $3c \cdot (2l+1)+3c$ edges. Since $l=\log_d \log n + O(1)$, the number of edges in $C'$ is $O(\log \log n)$. By choosing $k_g=\Theta(\log \log n)$ with a sufficiently large constant, $h$ with $k_g$-wise independence supports the argument in Lemma~\ref{lem:review_vocking}.

\section{The \emph{Uniform Greedy} scheme}\label{sec:uniform_scheme}
We prove our main result for the \emph{Uniform-Greedy} scheme --- Theorem~\ref{thm:intro_uniform} in this section.

\begin{theorem}\label{thm:sym}
For any $m=O(n)$, any constant $c \ge 2$, and integer $d$, there exists a hash family $\mathcal{H}$ from Construction~\ref{def:hash} with $O(\log n \log \log n)$ random bits that guarantees the max-load of the \emph{Uniform Greedy} scheme with $d$ independent choices from $\mathcal{H}$ is $\log_d \log n + O\big(c + \frac{m}{n}\big)$ with probability at least $1-n^{-c}$ for any $m$ balls in $U$.
\end{theorem}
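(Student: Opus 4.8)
The plan is to derandomize the proof of Lemma~\ref{lem:review_vocking} using the hash family $\mathcal{H}$ of Construction~\ref{def:hash}, following the outline of Section~\ref{sec:proof_sketch}. Fix $c$, $d$, and $m = O(n)$, and set the witness-tree height to $l = \log_d\log n + O(c + m/n)$ exactly as in Lemma~\ref{lem:review_vocking}, so that a ball of height $l + 4$ forces the existence of a pruned symmetric witness tree of height $l$ whose leaves have height at least $4$ in the functions $h^{(1)},\dots,h^{(d)}$, where each $h^{(j)} = \big(h^{(j)}_1 \circ \cdots \circ h^{(j)}_{k+1}\big) \oplus g^{(j)}$ is an independent draw from $\mathcal{H}$. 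We choose $k = \log_2(\log n / 3\log\log n)$, $k_g = \Theta(\log\log n)$ with a large enough constant, $\delta_1 = 1/\poly(n)$ as supplied by Corollary~\ref{cor:last_hash}, and $\delta_2 = (\log n)^{-C\log n}$ with $C$ large enough to make Claim~\ref{clm:SD_subsets} usable; this keeps the seed length at $O(\log n \log\log n)$. As in Lemma~\ref{lem:review_vocking}, we split over the number of edge collisions in the pruned tree: at most $3c$ collisions, and at least $3c$ collisions.

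For the \emph{at least $3c$ collisions} case I would reuse the argument of Lemma~\ref{lem:review_vocking} verbatim: extract the subconfiguration $C'$ on the first $3c$ collision edges and their ancestors, which has $|C'| \le 3c(2l+1) = O(\log\log n)$ nodes and $O(\log\log n)$ edges. The only probabilistic fact used there is that, for each edge $(u,v)$, the event $h^{(i)}(u) \in \{h^{(1)}(v),\dots,h^{(d)}(v)\}$ holds with probability $\le d/n$, and this is applied to a set of balls of total size $O(\log\log n)$; since $h$ is $k_g$-wise independent by Property~\ref{clm:loglogn_wise} with $k_g = \Theta(\log\log n)$ chosen above the number of distinct balls appearing in $C'$, the same union bound over $\le (1+d+\cdots+d^l)^{6c} \le n$ configurations goes through and gives probability $\le n^{-c}$.

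The heart of the proof — and the main obstacle — is the \emph{at most $3c$ collisions} case, derandomizing the union bound \eqref{eq:vocking_at_most_collisions}, because here $|C| = \Theta(\log n)$ and no $o(\log^2 n)$-bit family is $O(\log n)$-wise independent over $[n]$. I would fix a configuration $C$ (there are $\le n$ of them) and condition on the prefixes $\{h^{(j)}_1 \circ \cdots \circ h^{(j)}_k \mid j \in [d]\}$, which by Corollary~\ref{cor:last_hash} (applied with $g_0 = g^{(j)}$ restricted to its top $k$ coordinates) place at most $1.01 \log^3 n \cdot \frac{m}{n}$ balls in each of the $n/\log^3 n$ bins, except with probability $n^{-c}$. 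Given the prefixes, the set $\mathcal{F}_T$ of ball-instantiations of $C$ satisfying the prefix condition \eqref{eq:precondition} on every edge has size $\le n \cdot (1.01 d \cdot \frac{m}{n}\log^3 n)^{|C|-1} = (\log n)^{O(\log n)}$, since walking down each edge of the connected tree $C$ the child must land in one of $\le d$ prefix-buckets each of size $\le 1.01\log^3 n \cdot \frac{m}{n}$. Now the suffixes $h^{(1)}_{k+1},\dots,h^{(d)}_{k+1}$ are $\delta_2$-biased on $[\log^3 n]^U$: by Claim~\ref{clm:SD_subsets}, on the union of all $\le (\log n)^{O(\log n)}$ vertex sets (each of size $|C| = O(\log n)$) arising from $\mathcal{F}_T$, the total statistical distance from uniform is $(\log n)^{-\Omega(\log n)}$, which is $\le n^{-c}$ for $C$ large, so I may \emph{treat the suffixes as fully independent on each relevant ball-set}. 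Under genuine independence the residual probability that a given edge survives — conditioned on the prefix match \eqref{eq:precondition} — is $\le d / \log^3 n$, and the leaf-height-$\ge 4$ event still has probability $\le 3^{-d}$ per leaf (the bucket-capacity bound of Corollary~\ref{cor:last_hash} keeps the "$\le n/3$ bins are full" style estimate intact after rescaling). Summing over $\mathcal{F}_T$ gives
\[
|\mathcal{F}_T| \cdot \Big(\frac{d}{\log^3 n}\Big)^{|C|-1} \cdot (3^{-d})^{(|C|-3c)/2}
\;\le\; n \cdot \big(1.01 d \cdot \tfrac{m}{n}\big)^{|C|} \cdot d^{|C|} \cdot (3^{-d})^{(|C|-3c)/2},
\]
and since $|C| \ge 1 + d + \cdots + d^{l-3c} = \Omega(\log n)$ with $l$ chosen as above, the geometric factor $\big((1.01 d)^2 \cdot \tfrac{m^2}{n^2}\cdot 3^{-d}\big)^{|C|/2}$ — wait, here I must be slightly careful that $m/n = O(1)$ is absorbed into the additive $O(m/n)$ slack in $l$, exactly as in Theorem~\ref{thm:sym}'s statement — drives the bound below $n^{-2c-1}$; a final union bound over the $\le n$ configurations and the $d$ independent copies yields total failure probability $n^{-c}$, after adding the $O(1)$ conditioning-failure terms. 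The one genuinely delicate point to get right is the order of quantifiers: the set $\mathcal{F}_T$ (hence the collection of ball-sets on which we invoke Claim~\ref{clm:SD_subsets}) must be determined by the prefixes and $g$ \emph{alone}, independently of the suffixes, so that the statistical-distance swap is legitimate; this is why Construction~\ref{def:hash} separates the last coordinate $h_{k+1}$ with its much smaller bias $\delta_2$.
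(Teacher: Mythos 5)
Your overall architecture matches the paper's: split by the number of collisions, handle $\ge 3c$ collisions with the $k_g$-wise independence of $g$, and handle $\le 3c$ collisions by fixing the prefixes, bounding $|\mathcal{F}_T|$ via Corollary~\ref{cor:last_hash}, and passing from $O(\log n)$-wise independence to the $\delta_2$-biased suffix via Claim~\ref{clm:SD_subsets}. The many-collisions case and the order-of-quantifiers remark about $\mathcal{F}_T$ are fine.

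The genuine gap is in the leaf-height part of the few-collisions case. You keep the fully-random choice ``leaves of height at least $4$'' and assert the per-leaf probability $3^{-d}$ survives ``after rescaling.'' It does not, for two reasons. First, quantitatively: after derandomization the per-node factor in the union bound is no longer $n\cdot\frac{d}{n}=d$ but $\big(1.01d\log^3 n\cdot\frac{m}{n}\big)\cdot\frac{d}{\log^3 n}=\Theta\big(d^2\frac{m}{n}\big)$, so a per-leaf bound of $3^{-d}$ (one leaf per roughly two nodes) no longer makes the product decay --- already for $d=2$, $m=n$ one has $3^{-d/2}\cdot 1.01d^2>1$, and your displayed bound does not go below $n^{-2c-1}$. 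Second, the ``at most $n/3$ bins contain $3$ balls, so a random bin is full with probability $\le 1/3$'' argument is vacuous once $m\ge 3n$ (allowed under $m=O(n)$), and in any case it cannot be run when the only remaining randomness is the suffix $h_{k+1}$ over $[\log^3 n]$ conditioned on the prefixes. The paper's fix is precisely the source of the $O\big(c+\frac{m}{n}\big)$ term you tried to absorb into $l$: raise the leaf threshold to $b=10d\cdot\frac{m}{n}+1$ (it must scale with both $d$ and $m/n$, and cannot be traded for extra tree height), and bound the event that a leaf's bin has height $\ge b-1$ by an explicit sum over $(b-1)$-tuples of balls \emph{outside} the tree lying in the same prefix bucket, giving $\binom{1.01d\log^3 n\cdot m/n}{b-1}/(\log^3 n)^{b-1}\le(3/4)^{b-1}$ per choice, hence $2^{-3d^2}(n/m)^{2d}$ per leaf, which does beat $\Theta\big(d^2\frac{m}{n}\big)$ per node. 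This also forces the bookkeeping you omit: the leaf events involve $b\cdot d+1$ extra balls each, so the independence parameter must be taken as $t=5b\cdot d^{l+2}$ (covering edges and all leaf events), and one needs the disjointness of those ball sets from the tree to multiply the edge and leaf probabilities under $t$-wise independence before invoking Claim~\ref{clm:SD_subsets} with $\delta_2=n^{-c-1}(\log^3 n)^{-t}/|\mathcal{F}_T|$. Without this replacement your union bound in the central case simply does not close.
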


\begin{proof}
We specify the parameters of $\mathcal{H}$ as follows: $k_g=10c(\log_d \log m + \log_d (2+2c) + 5 + 3c ), k=\log_2 \frac{\log n}{3 \log \log n}$, $\delta_2=\log n^{-C\log n}$ for a large constant $C$, and $\delta_1=1/\poly(n)$ such that Corollary \ref{cor:last_hash} holds with probability at least $1-n^{-c-1}$. Let $h^{(1)}, \ldots, h^{(d)}$ denote the $d$ independent hash functions from $\mathcal{H}$ with the above parameters, where each \[ h^{(j)}(x) = \big(h^{(j)}_1(x) \circ h^{(j)}_2(x) \circ \cdots \circ h^{(j)}_k(x) \circ h^{(j)}_{k+1}(x) \big) \oplus g^{(j)}(x).\] We use the notation $g$ to denote $\{g^{(1)},g^{(2)},\ldots,g^{(d)}\}$ in the $d$ choices and $h_{i}$ to denote the group of hash functions $\{h^{(1)}_{i},\ldots, h^{(d)}_i\}$ in this proof. 

We bound the probability that any symmetric witness tree of height $l=\lceil \log_d \log m + \log_d (2+2c) + 5 + 3c \rceil$ with leaves of height at least $b=\ConstHeight$ exists in $h^{(1)}, \ldots, h^{(d)}$. Similar to the proof of Lemma~\ref{lem:review_vocking}, we bound the probability of pruned witness trees of height $l$ in $h^{(1)}, \ldots, h^{(d)}$. We separate all pruned witness trees into two cases according to the number of edge collisions: pruned witness trees with at most $3c$ collisions and pruned witness trees with at least $3c$ collisions.

\paragraph{Pruned witness trees with at least $3c$ collisions.} We start with a configuration $C$ of pruned witness trees with height $l$ and at least $3c$ collisions. Let $e_1,\ldots,e_{3c}$  be the first $3c$ collisions in the BFS of $C$. Let $C'$ be the induced subgraph of $C$ that only contains nodes in these edges $e_1,\ldots,e_{3c}$ and their ancestors in $C$. Therefore any pruned witness tree $T$ of configuration $C$ exists in $h^{(1)},\ldots,h^{(d)}$ only if the corresponding counterpart $T'$ of $T$ with configuration $C'$ exists in $h^{(1)},\ldots,h^{(d)}$. The existence of $T'$ in $h^{(1)},\ldots,h^{(d)}$ indicates that for every edge $(u,v)$ in $T'$, $h^{(1)},\ldots,h^{(d)}$ satsify
\begin{equation}\label{eq:k_g_wise}
h^{(i)}\big(T(u)\big) \in \left\{h^{(1)}\big(T(v)\big),\ldots,h^{(d)}\big(T(v)\big)\right\} \text{ when $v$ is the $i$th child of $u$}.
\end{equation}
Notice that the number of edges in $C'$ and $T'$ is at most $3c \cdot 2l+3c=2l(3c+1) \le k_g/2$.

Because $h^{(1)},\ldots,h^{(d)}$ are $k_g$-wise independent, We bound the probability that all edges of $T'$ satisfy \eqref{eq:k_g_wise} in $h^{(1)},\ldots,h^{(d)}$ by
\[
\prod_{(u,v) \in T'}(\frac{d}{n})=(\frac{d}{n})^{|C'|+3c-1}.
\]
Now we apply a union bound over all choices of balls in $C'$. There are at most $m^{|C'|}$ choices of balls in the nodes of $C'$. Therefore we bound the probability that any witness with at least $3c$ collisions survives in $k_g$-wise independent functions by
\[
(\frac{d}{n})^{|C'|+3c-1} \cdot m^{|C'|} \le (\frac{d}{n})^{3c-1} \cdot (\frac{m}{n} \cdot d)^{|C'|} \le (\frac{d}{n})^{3c-1} \cdot (\frac{m}{n} \cdot d)^{3c \cdot (2l+1)} \le n^{-2c}.
\]

Next we apply a union bound over all configurations $C'$. Because there are at most $(1+d+\cdots+d^l)^{2\cdot 3c} \le n$ configurations of $3c$ collisions, with probability at least $1-n^{-c}$, there is no pruned witness trees with at least $3c$ collision and height $l$ exists in $h^{(1)},\ldots,h^{(d)}$.

\paragraph{Pruned witness trees with at most $3c$ collisions.}
We fix a configuration $C$ of pruned witness trees with height $l$ and less than $3c$ collisions. Next we bound the probability that any pruned witness trees in this configuration $C$ with leaves of height at least $b$ exists in $h^{(1)},\ldots,h^{(d)}$.

We extensively use the fact that after fixing $g$ and $h_1 \circ \cdots \circ h_k$, at most $d (1.01 \log^3 n \cdot \frac{m}{n})$ elements in $h^{(1)},\ldots,h^{(d)}$ are mapped to any bin of $[n/\log^3 n]$ from Corollary~\ref{cor:last_hash}. Another property is the number of leaves in $C$: because there are at most $3c$ collisions in $C$, $C$ has at least $d^{l-3c} \in [d^5(2+2c) \log m, d^6(2+2c) \log m]$ leaves. On the other hand, the number of leaves is at least $\frac{|C|-3c}{2}$.

For a pruned witness tree $T$ with configuration $C$, $T$ exists in $h^{(1)},\ldots,h^{(d)}$ only if
\begin{equation}\label{eq:t_wise}
\forall (u,v) \in C, h^{(i)}\big(T(u)\big) \in \left\{h^{(1)}\big(T(v)\big),\ldots,h^{(d)}\big(T(v)\big)\right\} \text{ when $v$ is the $i$th child of $u$}.
\end{equation}
We restate the above condition  on the prefixes and suffixes of $h^{(1)},\ldots,h^{(d)}$ separately. Let $g_{p}(x)$ denote the first $\log n - 3 \log \log n$ bits of $g(x)$ and $g_{s}(x)$ denote the last $3 \log \log n$ bits of $g(x)$, which matches $h_1(x) \circ \cdots \circ h_k(x)$ and $h_{k+1}(x)$ separately. Since $h^{(i)}(x)=\big(h^{(i)}_1(x) \circ \cdots \circ h^{(i)}_{k+1}(x) \big) \oplus g^{(i)}(x)$, property \eqref{eq:t_wise} indicates that the prefixes of the balls $b_u=T(u)$ and $b_v=T(v)$ satisfy
\begin{equation}\label{eq:prefix}
\big(h^{(i)}_1(b_u)   \circ \cdots \circ h^{(i)}_k (b_u) \big) \oplus g^{(i)}_{p}(b_u) \in \left\{ \big(h^{(1)}_1(b_v)  \circ \cdots \circ h^{(1)}_k(b_v) \big) \oplus g^{(1)}_{p}(b_v),\ldots, \big(h^{(d)}_1(b_v) \circ \cdots \circ h^{(d)}_k(b_v) \big) \oplus g^{(d)}_{p}(b_v)\right\}.
\end{equation}
and their suffixes satisfy
\begin{equation}\label{eq:suffix}
h^{(i)}_{k+1}(b_u) \oplus g^{(i)}_s(b_u) \in \left\{h^{(1)}_{k+1}(b_v) \oplus g^{(1)}_s(b_v),\ldots, h^{(d)}_{k+1}(b_v) \oplus g^{(d)}_s(b_v)\right\}.
\end{equation}

Let $\mathcal{F}_T$ be the subset of witness trees in the configuration $C$ whose edges satisfy the condition \eqref{eq:prefix} in preffixes $h_{(1)},\ldots,h_{(k)}$, i.e., $\mathcal{F}_T=\{T|\text{configuration}(T)=C \text{ and } (u,v) \text{ satisfies \eqref{eq:prefix} } \forall (u,v) \in T\}$. We show that
\[ |\mathcal{F}_T| \le m \cdot (d \cdot 1.01 \log^3 n \cdot \frac{m}{n})^{|C|-1}.\]
The reason is as follows. There are $m$ choices of balls for the root $u$ in $C$. For the $i$th child $v$ of the root $u$, we have to satisfy the condition \eqref{eq:prefix} for $(u,v)$. For a  fixed bin $\big(h^{(i)}_1(b_u)   \circ \cdots \circ h^{(i)}_k (b_u) \big) \oplus g^{(i)}_{p}(b_u)$, there are at most $1.01 \cdot \log^3 n \cdot \frac{m}{n}$ elements from each hash function $h^{(j)}$ mapped to this bin from Corollary~\ref{cor:last_hash}. Hence there are at most $d \cdot 1.01 \log^3 n \cdot \frac{m}{n}$ choices for each child of $u$. Then we repeat this arguments for all non-leaf nodes in $C$.

Next we consider the suffixes $h^{(1)}_{k+1},\ldots,h^{(d)}_{k+1}$. We first calculate the probability that any possible witness tree in $\mathcal{F}_T$ survives in $h^{(1)}_{k+1},\ldots,h^{(d)}_{k+1}$ from $t$-wise independence for $t=5b \cdot d^{l+2}=O(\log n)$. After fixing $g_s$,  for a possible witness tree $T$ in $\mathcal{F}_T$, $h^{(1)}_{k+1}, \ldots,h^{(d)}_{k+1}$ satisfy \eqref{eq:suffix} for every edge $(u,v) \in C$ with probability $\frac{d}{\log^3 n}$ in $t/2$-wise independent distributions because the number of edges in $C$ is less than $t/2$.

For each leaf $v$ in $T$, we bound the probability that its height is at least $b=\ConstHeight$ by $2^{-3d^2} \cdot (\frac{n}{m})^{2d}$ in $(b \cdot d+1)$-wise independence. Given a choice $i \in [d]$ of leaf $v$, we fix the bin to be $h^{(i)}(v)$. Then we bound the probability that there are at least $b-1$ balls $w_1,\ldots,w_{b-1}$ in this bin excluding all balls in the tree by
\begin{align*}
 \sum_{w_1: w_1<v, w_1 \notin T} \sum_{w_2: w_1<w_2<v,w_2 \notin T} \cdots \sum_{w_{b-1}:w_{b-2}<w_{b-1}<v,w_{b-1} \notin T} \Pr[h^{(i)}(v)=h^{(j_1)}(w_1)=\cdots=h^{(j_{b-1})}(w_{b-1})]\\ \le \frac{{1.01 d \cdot \log^3 n \cdot \frac{m}{n} \choose b-1 }}{(\log^3 n)^{b-1}} \le \frac{(1.01 d \cdot \frac{m}{n})^{b-1}}{(b-1)!}\le (\frac{3}{4})^{b-1}.
\end{align*}
For all $d$ choices of this leaf $v$, this probability is at most $(\frac{3}{4})^{(b-1) \cdot d}\le 2^{-3d^2} \cdot (\frac{n}{m})^{2d}$.

Because $w_1,\ldots,w_b$ are not in the tree $T$ for every leaf, they are disjoint and independent with the events of \eqref{eq:suffix} in $T$, which are over all edges in the tree. Hence we could multiply these two probability together in $t$-wise independence given $t/2 \ge (b \cdot d +1) \cdot \text{number of leaves}$. Then we apply a union bound over all possible pruned witness trees in $\mathcal{F}_T$ to bound the probability (in the $t$-wise independence) that there is one witness tree of height $l$ whose leaves have height at least $\ConstHeight$ by
\begin{align*}
|\mathcal{F}_T| \cdot (\frac{d}{\log^3 n})^{|C|-1} \cdot \big((\frac{3}{4})^{b \cdot d}\big)^{\text{number of leaves}} \le & m \left(1.01d \cdot \log^3 n \cdot \frac{m}{n} \cdot \frac{d}{\log^3 n}\right)^{|C|} \cdot \left(2^{-3d^2} \cdot (\frac{n}{m})^{2d}\right)^{\frac{|C|-3c}{2}} \\
\le & m \cdot \left(2d^2 \cdot \frac{m}{n} \right)^{|C|} \cdot \left(2^{-3d^2} \cdot (\frac{n}{m})^{2d} \right)^{|C|/3} \le m \cdot 2^{-|C|/3} \le n^{-c-1}.
\end{align*}

Finally we replace the $t$-wise independence by a $\delta_2$-biased space for $\delta_2 = n^{-c-1} \cdot (\log^3 n)^{-t}/|\mathcal{F}_T| = (\log n)^{-O(\log n)}$. We apply Claim~\ref{clm:SD_subsets} to all possible pruned witness tress in $\mathcal{F}_T$: in $\delta_2$-biased spaces, the probability of the existence of any height-$l$ witness tree with leaves of height at least $b=\ConstHeight$ is at most
\[ n^{-c-1} + |\mathcal{F}_T| \cdot \delta_2 \cdot (\log^3 n)^{t} \le 2 n^{-c-1}.\] Then we apply a union bound on all possible configurations with at most $3c$ collisions:
\[ (d^{l+1})^{|3c|} \cdot 2n^{-c-1} \le 0.5 n^{-c}. \]

From all discussion above, with probability at least $1-n^{-c}$, there is no ball of height more than $l+b=\log_d \log n + O(1)$.
\end{proof}

%In the dynamic allocation process, an oblivious adversary specifies a sequence of insertion and deletion operations about balls but guarantees that there are at most $m$ balls at any moment, where the term oblivious means the sequence of operations is independent from the random choices of balls. We can prove a bound on the maximum load of $\frac{\log \log n}{\log d} + O(\log)$

%%%%%%%%%%%%%%%%%%%%%%%%%%%%%%%%%%%%%%%%%%%%%%%%%%%%%%%%%%%%%%%%%%%%%%%%%%%%%%%%%%

\section{The \emph{Always-Go-Left} Scehme}\label{asymmetric_tree}
We show that the hash family in Section~\ref{sec:hash_func} with proper parameters also achieves a max-load of $\frac{\log\log n}{d \log \phi_d } + O(1)$ in the \emph{Always-Go-Left} scheme \cite{Vocking} with $d$ choices, where $\phi_d>1$ is the constant satisfying $\phi_d^d=1 + \phi_d + \cdots + \phi_d^{d-1}$. We define the \emph{Always-Go-Left} scheme~\cite{Vocking} as follows:

\begin{definition}[\emph{Always-Go-Left} with $d$ choices]
Our algorithm partitions the bins into $d$ groups $G_1,\ldots,G_d$ of the same size $n/d$. Let $h^{(1)},\ldots,h^{(d)}$ be $d$ functions from $U$ to $G_1,\ldots,G_d$ separately. For each ball $b$, the algorithm considers $d$ bins $\{h^{(1)}(b) \in G_1, \ldots, h^{(d)}(b) \in G_d\}$ and chooses the bin with the least number of balls. If there are several bins with the least number of balls, our algorithm always chooses the bin with the smallest group number.
\end{definition}

We define asymmetric witness trees for the \emph{Always-Go-Left} mechanism such that a ball of height $l+C$ in the \emph{Always-Go-Left} scheme indicates that there is an asymmetric witness tree of height $l$ whose leaves have height at least $C$. For an asymmetric witness tree $T$, the height of $T$ is still the \emph{shortest} distance from the root to its leaves.
\begin{definition}[Asymmetric Witness tree]
The asymmetric witness tree $T$ of height $l$ in group $G_i$ is a $d$-ary tree. The root has $d$ children where the subtree of the $j$th child is an asymmetric witness tree in group $G_j$ of height $(l-1_{j\ge i})$.

Given $d$ functions $h^{(1)},\ldots,h^{(d)}$ from $U$ to $G_1,\ldots,G_d$ separately, a ball $b$ with height more than $l+C$ in a bin of group $G_i$ indicates an asymmetric witness tree $T$ of height $l$ in $G_i$ whose leaves have height at least $C$. Each node of $T$ corresponds to a ball, and the root of $T$ corresponds to the ball $b$. A ball $u$ in $T$ has a ball $v$ as its $j$th child iff when we insert the ball $u$ in the \emph{Always-Go-Left} mechanism, $v$ is the top ball in the bin $h^{(j)}(u)$. Hence $v<u$ and $h^{(j)}(u)=h^{(j)}(v)$ when the $j$th child of $u$ is $v$.
\end{definition}
For an asymmetric witness tree $T$ of height $l$ in group $G_i$, We use the height $l$ and the group index $i \in [d]$ to determine its size. Let $f(l,i)$ be the size of a \emph{full} asymmetric witness tree of height $l$ in group $G_i$. From the definition, we have $f(0,i)=1$ and
\[ 
f(l,i) = \sum_{j=1}^{i-1} f(l,j) + \sum_{j=i}^d f(l-1,j).
\]
Let $g\big((l-1) \cdot d+i\big)=f(l,i)$ such that 
\[
g(n)=g(n-1) + g(n-2) + \cdots + g(n-d).
\] 
We know there exist $c_0>0$, $c_1=O(1)$, and $\phi_d>1$ satisfying
\[
\phi_d^d = 1 + \phi_d + \cdots + \phi_d^{d-1} \text{ such that } g(n) \in [c_0 \cdot \phi_d^n,c_1 \cdot \phi_d^n].
\]
Hence $$f(l,i) = g\big((l-1) \cdot d+i\big) \in [c_0 \cdot \phi_d^{(l-1)d+i},c_1 \cdot \phi_d^{(l-1)d+i}] .$$ Similar to the pruned witness tree of a symmetric witness tree, we  use the same process in Definition~\ref{def:pruned_witness} to obtain the pruned asymmetric witness tree of an asymmetric witness tree.

V\"ocking in \cite{Vocking} showed that in a perfectly random hash function, the maximum load is $\frac{\log \log n}{d \log \phi_d} + O(1)$ with high probability given any $n$ balls. We outline V\"ocking's argument for \emph{distinct balls} here: let $b$ be a ball of height $l+4$ for $l=\frac{\log \log n + \log (1+c) }{d \log \phi_d}+1$. Without loss of generality, we assume that $b$ is in the first group $G_1$. By the definition of the asymmetric witness tree, there exists a tree $T$ in $G_1$ with root $b$ and height $l$ whose leaves have height at least $4$. For each ball $u$ and its $i$th ball $v$, the hash function $h^{(i)}$ satisfies $h^{(i)}(u)=h^{(i)}(v)$. Similar to \eqref{eq:vocking_at_most_collisions}, we apply a union bound on all possible witness trees of height $l$ in this configuration to bound the probability by
\[ n^{f(l,1)} \cdot (\frac{d}{n})^{f(l,1)-1} \cdot (\frac{1}{3^d})^{\text{number of leaves in }f(l,1)},\]
which is less than $n^{-c}$ given $f(l,1) = \Theta(\phi_d^{(l-1)d+1}) = \Theta\big((1+c)\log n\big)$.

We prove our derandomization of V\"ocking's argument here.
\begin{theorem}\label{thm:goleft}
For any $m=O(n)$, any constants $c>1$ and $d \ge 2$, there exist a constant $\phi_d \in (1.61,2)$ and a hash family $\mathcal{H}$ in Construction~\ref{def:hash} with $O(\log n \log \log n)$ random bits such that for any $m$ balls in $U$, with probability at least $1-n^{-c}$, the max-load of the \emph{Always-Go-Left} mechanism with $d$ independent choices from $\mathcal{H}$ is $\frac{\log \log n }{d \log \phi_d} + O(c+\frac{m}{n})$.
\end{theorem}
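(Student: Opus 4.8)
The plan is to mirror the proof of Theorem~\ref{thm:sym} but replace symmetric witness trees with the asymmetric witness trees of the \emph{Always-Go-Left} scheme, using the size estimate $f(l,i) = \Theta(\phi_d^{(l-1)d+i})$ in place of the $d^l$ count. First I would fix the parameters of $\mathcal{H}$ exactly as in the proof of Theorem~\ref{thm:sym}: set $k = \log_2 \frac{\log n}{3\log\log n}$, $\delta_1 = 1/\poly(n)$ small enough that Corollary~\ref{cor:last_hash} holds with probability $1 - n^{-c-1}$, $\delta_2 = (\log n)^{-C\log n}$ for a large constant $C$, and $k_g = \Theta(\log\log n)$ with a large enough constant to handle the high-collision case. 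I choose the witness-tree height $l = \frac{\log\log m + \log(1+c)}{d\log\phi_d} + O(c)$ so that $f(l,1) = \Theta\big((1+c)\log n\big)$ even after trimming $3c$ collisions, and leaf height $b = \ConstHeight$. A ball of height $l+b$ in some group $G_i$ forces a pruned asymmetric witness tree of height $l$ in $G_i$ whose leaves have height at least $b$, so it suffices to bound the probability no such tree exists in $h^{(1)},\ldots,h^{(d)}$, then union bound over the $d$ possible root groups.

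Then I would split into the two cases. For pruned asymmetric witness trees with at least $3c$ collisions, the argument is essentially identical to Theorem~\ref{thm:sym}: extract the subconfiguration $C'$ spanned by the first $3c$ collisions in BFS order and their ancestors; since $l = O(\log\log n / (d\log\phi_d)) = O(\log\log n)$, the number of edges in $C'$ is $O(\log\log n) \le k_g/2$, so $k_g$-wise independence of $h$ (Property~\ref{clm:loglogn_wise}) lets me bound the survival probability of each edge by $d/n$ via the condition $h^{(j)}(u) = h^{(j)}(v)$, multiply over the $|C'| + 3c - 1$ edges, union bound over $m^{|C'|}$ ball choices and over the $\le n$ configurations of $3c$ collisions, and conclude the total is $\le n^{-c}$. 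For pruned asymmetric witness trees with at most $3c$ collisions, I fix a configuration $C$, split each $h^{(j)}$ into prefix $h^{(j)}_1 \circ \cdots \circ h^{(j)}_k$ (XORed with $g^{(j)}_p$) and suffix $h^{(j)}_{k+1}$ (XORed with $g^{(j)}_s$), and fix $g$ together with all prefixes. Corollary~\ref{cor:last_hash} bounds the load of each prefix bin in $[n/\log^3 n]$ by $1.01\log^3 n \cdot \frac{m}{n}$, so the set $\mathcal{F}_T$ of witness trees in configuration $C$ whose edges satisfy the prefix condition has size at most $m \cdot (d \cdot 1.01\log^3 n \cdot \frac{m}{n})^{|C|-1} = (\log n)^{O(\log n)}$, since here $|C| = \Theta(f(l,1)) = \Theta(\log n)$. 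Over the suffixes, taken as $t$-wise independent for $t = O(\log n)$, each edge survives with probability $d/\log^3 n$ and each leaf reaches height $b$ with probability $\le 2^{-3d^2}(n/m)^{2d}$ (same calculation as in Theorem~\ref{thm:sym}), and since the number of leaves of $C$ is at least $\frac{|C|-3c}{2}$ and at least $\Theta(\phi_d^{(l-1)d})$, the union bound over $\mathcal{F}_T$ gives $|\mathcal{F}_T| \cdot (d/\log^3 n)^{|C|-1} \cdot \big((3/4)^{bd}\big)^{\#\text{leaves}} \le m \cdot 2^{-|C|/3} \le n^{-c-1}$. Finally I replace $t$-wise independence by the $\delta_2$-biased space via Claim~\ref{clm:SD_subsets}, losing an additive $|\mathcal{F}_T|\cdot\delta_2\cdot(\log^3 n)^t \le n^{-c-1}$, union bound over the $\le (d^{l+1})^{3c} \le n^{o(1)}$ configurations and over the $d$ root groups, and conclude.

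The step I expect to be the main obstacle is verifying that the combinatorial accounting for asymmetric trees still closes with room to spare. Concretely: (i) the number of leaves of a trimmed asymmetric configuration $C$ must still be $\Omega(|C|)$ — for a full asymmetric tree this follows because each internal node has $d \ge 2$ children and the tree is "balanced enough," but one must check that trimming $3c$ collisions and the non-uniform recursion $f(l,i) = \sum_{j<i} f(l,j) + \sum_{j\ge i} f(l-1,j)$ do not destroy this; (ii) the count of distinct configurations of height $l$ with $3c$ collisions: nodes now live in $[n]$ with $d$ choices each (from the $d$ groups), and the tree shape is asymmetric, but the bound $(d^{l+1})^{O(c)}$ on the number of configurations still holds since $f(l,i) \le d^{l+1}$; and (iii) choosing $l$ so that $f(l,1) \ge (2+2c)\log n$ after removing the bottom $3c$ levels, which forces $l = \frac{\log\log n + \log(1+c)}{d\log\phi_d} + O(c)$ and hence the claimed max-load $l + b = \frac{\log\log n}{d\log\phi_d} + O(c + \frac{m}{n})$. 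The asymmetric leaf-height event needs a small additional observation: a leaf $v$ in group $G_i$ reaching height $b$ means its one chosen bin in $G_i$ (not $d$ bins) is full — but since a ball was actually placed there by the \emph{Always-Go-Left} rule, \emph{all} $d$ of that ball's choices had height $\ge b-1$, so the same $(3/4)^{(b-1)d}$ bound applies; I would state this carefully to match the witness-tree definition where each leaf still has $d$ choices in the underlying hash functions.
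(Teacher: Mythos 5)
Your proposal is correct and follows essentially the same route as the paper: the same two-case split by number of collisions, $k_g$-wise independence for the many-collision case, Corollary~\ref{cor:last_hash} to bound $|\mathcal{F}_T|$ over the prefixes, Claim~\ref{clm:SD_subsets} to pass from $t$-wise independence to the $\delta_2$-biased suffix, and the size function $f(l,i)=\Theta(\phi_d^{(l-1)d+i})$ to set $l$ so that $f(l,1)=\Theta((1+c)\log n)$, including the correct observation that a leaf placed at height $b$ forces all $d$ of its choices to have height at least $b-1$. The only (cosmetic) difference is that the paper explicitly rescales the ranges, taking $h_{k+1}:U\rightarrow[\log^3 n/d]$ and $g:U\rightarrow[n/d]$ so each $h^{(j)}$ maps into a group of $n/d$ bins, which is exactly what your edge probabilities $d/n$ and $d/\log^3 n$ implicitly assume, and it roots all trees in $G_1$ rather than union bounding over the $d$ root groups; both variations are immaterial.
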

\begin{proof}
Let $l$ be the smallest integer such that $c_0 \phi_d^{l d} \ge 10(2+2c) \log m$ and $b=\ConstHeight$. We bound the probability of a witness tree of height $l+3c+1$ whose leaves have height more than $b$ in $h^{(1)},\ldots,h^{(d)}$ during the \emph{Always-Go-Left} scheme. Notice that there is a ball of height $l+b+3c+1$ in any bin of $G_2,G_3,\ldots,G_d$ indicates that there is a ball of the same height in $G_1$.

We choose the parameters of $\mathcal{H}$ as follows: $k_g=20c \cdot d \cdot (l+b+1+3c)=O(\log \log n), k=\log_2 (\log n/3\log \log n)$, $\delta_1=1/\poly(n)$ such that Corollary~\ref{cor:last_hash} happens with probability at most $n^{-c-1}$, and the bias $\delta_2=\log n^{-O(\log n)}$ of $h_{k+1}$ later. We set $h_{k+1}$ to be a hash function from $U$ to $[\log^3 n/d]$ and $g$ to be a function from $U$ to $[n/d]$ such that 
$$
h^{(j)}=\big( h^{(j)}_1 \circ h^{(j)}_1 \circ \cdots \circ h^{(j)}_k \circ h^{(j)}_{k+1} \big) \oplus g^{(j)}
$$ 
is a map from $U$ to $G_j$ of $[n/d]$ bins for each $j \in d$.

We use $h^{(1)}, \ldots, h^{(d)}$ to denote $d$ independent hash functions from $\mathcal{H}$ with the above parameters. We use the notation of $h_{i}$ to denote the group of hash functions $\{h^{(1)}_{i},\ldots, h^{(d)}_i\}$ in this proof. We assume Corollary~\ref{cor:last_hash} and follow the same argument in the proof of Theorem \ref{thm:sym}. We bound the probability of witness trees from 2 cases depending on the number of collisions.

\paragraph{Pruned witness trees with at least $3c$ collisions:}  Given a configuration $C$ with at least $3c$ collisions, we consider the first $3c$ collisions $e_1,\ldots,e_{3c}$ in the BFS of $C$. Let $C'$ be the induced subgraph of $C$ that only contains all vertices in $e_1,\ldots,e_{3c}$ and their ancestors in $C$. Therefore $C$ survives under $h^{(1)},\ldots,h^{(d)}$ only if $C'$ survives under $h^{(1)},\ldots,h^{(d)}$.

Observe that $|C'| \le 3c \cdot 2 \cdot \big(d \cdot \text{height}(T)\big)$. There are at most $m^{|T'|}$ possible instantiations of balls in $C'$. For each instantiation $T$ of $C'$, because $k_g \ge 2 \cdot \text{number of edges}=2( |C'|+3c-1)$, we bound the probability that any instantiation of $C'$ survives in $h$ by
\[ 
m^{|C'|} \cdot (\frac{d}{n})^{\text{number of edges}}=m^{|C'|} \cdot (\frac{d}{n})^{|C'|+3c-1} \le (d m/n)^{|C'|} \cdot (\frac{d}{n})^{3c-1} \le n^{-2c}. 
\] 
At the same time, there are at most $(|T|^2)^{3c}=\poly(\log n)$ configurations of $C'$. Hence we bound the probability of any witness with at least $3c$ collisions surviving by $n^{-c}$.

\paragraph{Pruned witness tree with less than $3c$ collisions:}
We fix a configuration $C$ of witness tree in group $G_1$ with height $l+1+3c$ and less than $3c$ collisions.  Thus $|C| \in [f(l+1,1),f(l+1+3c,1)]$.

Let $\mathcal{F}_T$ be the subset of possible asymmetric witness tree with configuration $C$ after fixing the prefixes $h_1, h_2, \ldots, h_k$. For any $T \in \mathcal{F}_T$, each edge $(u,v)$ has to satisfy $h^{(i)}\big(T(u)\big)=h^{(i)}\big(T(v)\big)$ in the \emph{Always-Go-Left} scheme when $v$ is the $i$th child of $u$. This indicates their prefixes are equal:
\[
h^{(i)}_1\big(T(u)\big) \circ \cdots \circ h^{(i)}_k\big(T(u)\big)=h^{(i)}_1\big(T(v)\big) \circ \cdots \circ h^{(i)}_k\big(T(v)\big).
\]
From the same argument in the proof of Theorem~\ref{thm:sym}, we bound \[|\mathcal{F}_T| \le m \cdot (1.01 \log^3 n \cdot \frac{m}{n})^{|C|-1}\] under $h_1, h_2,\ldots, h_k$ from Corollary~\ref{cor:last_hash}.

We first consider $h_{k+1}$ as a $t$-wise independent distribution from $U$ to $[\log^3 n/d]$ for $t=5bd \cdot f(l+3c+1,1)=O(\log m)$ then move to $\delta_2$-biased spaces. For each asymmetric witness tree, every edge $(u,v)$ maps to the same bin w.p. $d/\log^3 n$ in $h_{k+1}$.

For each leaf, its height is at least $b$ if each bin in its choices has height at least $b-1$, which happens with probability at most
\[ \left(\frac{{1.01 \cdot \log^3 n \cdot \frac{m}{n} \choose b-1}}{(\log^3 n/d)^{b-1}}\right)^d \le \left(\frac{(1.01 d \cdot \frac{m}{n})^{b-1}}{(b-1)!}\right)^d \le 2^{-3d^2} \cdot (\frac{n}{m})^{2d}
\]
from the proof of Theorem~\ref{thm:sym}.

Because these two types of events are on disjoint subsets of balls, the probability that any possible asymmetric witness tree in $\mathcal{F}_T$ exists in $t$-wise independent distributions over the suffixes is at most
\begin{align*}
|\mathcal{F}_T| \cdot \left(\frac{d}{\log^3 n}\right)^{|C|-1} \cdot \left(2^{-3d^2} \cdot (\frac{n}{m})^{2d}\right)^{\frac{(d-1)(|C|-3c)}{d}} \le & m \cdot \left(1.01 d \cdot \frac{m}{n} \right)^{|C|} \cdot \left(2^{-3d^2} \cdot (\frac{n}{m})^{2d}\right)^{|C|/3} \\ \le & m \cdot 2^{-f(l+1,1)} \le n^{-c-1}.
\end{align*}

We choose $\delta_2=n^{-c-1} \cdot (\log^3 n/d)^{-t}/|\mathcal{F}_T|=(\log n)^{-O(\log n)}$ such that in $\delta_2$-biased spaces, any possible asymmetric witness tree in $\mathcal{F}_T$ exists $h_{k+1}$ is at most happens with probability at most $n^{-c-1} + |\mathcal{F}_T| \cdot \delta_2 \cdot (\log^3/d)^{b d \cdot f(l+3c+1,1)} \le 2 n^{-c-1}$. At the same time, the number of possible configurations is at most $(f(l+3c+1,1)^2)^{3c} \le 0.1 n$.

From all discussion above, with probability at most $n^{-c}$, there exists a ball in the \emph{Always-Go-Left} mechanism with height at least $l+b+3c+1=\frac{\log n \log n}{d \log \phi_d} + O(1)$.
\end{proof}

%%%%%%%%%%%%%%%%%%%%%%%%%%%%%%%%%%%%%%%%%%%%%%%%%%%%%%%%%%%%%%%%%%%%%%%%

\section{Heavy load}\label{sec:heavy_load}
We consider the derandomization of the 1-choice scheme when we have $m=n \cdot \poly (\log n)$ balls and $n$ bins. From the Chernoff bound, w.h.p, the max-load among $n$ bins is $\frac{m}{n} \left(1 + O(\sqrt{\log n} \cdot \sqrt{\frac{n}{m}}) \right)$ when we throw $m>n \log n$ balls into $n$ bins independently at random.
We modify the hash function from \cite{CRSW} with proper parameters for $m=\poly(\log n) \cdot n$ balls and prove the max-load is still $\frac{m}{n} \left(1 + O(\sqrt{\log n} \cdot \sqrt{\frac{n}{m}}) \right)$. We assume $m=\log^{a} n \cdot n$ for a constant $a \ge 1$ in the rest of this section.

\begin{theorem}\label{thm:heavy_load}
For any constant $c>0$ and $a \ge 1$, there exist a constant $C$ and a hash function from $U$ to $[n]$ generated by $O(\log n \log \log n)$ random bits such that for any $m=\log^a n \cdot n$ balls, with probability at least $1-n^{-c}$, the max-load of the $n$ bins in the 1-choice scheme with the hash function $h$ is at most $\frac{m}{n}\left(1+ C \cdot \sqrt{\log n} \cdot \sqrt{\frac{n}{m}}\right)$.
\end{theorem}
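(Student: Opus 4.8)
The plan is to re‑tune the Celis et al.\ hash family (Construction~\ref{def:hash}; the $\oplus g$ term plays no role here, so take $g$ constant) for $m=\log^a n\cdot n$ balls, using a coarse/fine split. Fix a large constant $B=B(a,c)$ and let $k=\log_2\!\big(\log n/(B\log\log n)\big)$, so the coarse map $h_{\mathrm{co}}:=h_1\circ\cdots\circ h_k$ sends $U$ into $\rho:=n/\log^B n$ coarse bins, each receiving $\mu:=m/\rho=\log^{a+B} n$ balls in expectation, while the fine map $h_{k+1}$ (range $\log^B n$, bias $\delta_2=(\log n)^{-\Theta(\log n)}$) refines every coarse bin into $\log^B n$ actual bins with expectation $m/n=\log^a n$ each. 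Keeping $\delta_1=1/\poly(n)$ for $h_1,\dots,h_k$ and $\delta_2=(\log n)^{-\Theta(\log n)}$ for $h_{k+1}$ keeps the seed length $O(\log n\log\log n)$, as in Construction~\ref{def:hash}. The point of the split is that $h_{\mathrm{co}}$ already pins down the leading term $m/n$ of every actual load, so $h_{k+1}$ only has to supply the Chernoff fluctuation $O\!\big(\sqrt{\log n\cdot m/n}\big)$ inside each coarse bin.

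\textbf{Step 1 (coarse bins are tightly balanced).} First I would prove the heavy‑load strengthening of Lemma~\ref{lem:uniform_distributed}: with probability $\ge 1-n^{-c-1}$ over $h_1,\dots,h_k$, every coarse bin holds at most $(1+\epsilon_0)\mu$ balls, where $\epsilon_0=(\log n)^{-\gamma}$ for a constant $\gamma>(a-1)/2$ (which we may take as large as we like, at the cost of increasing $B$). This is the level‑by‑level argument of Appendix~\ref{app:proof}, with the per‑bin mean at level $i$ being $n^{2^{-i}}\cdot m/n=n^{2^{-i}}\log^a n$ in place of $n^{2^{-i}}$: one bounds the count of a fixed bin of $h_1\circ\cdots\circ h_i$ by its $K_i$‑th moment with $K_i=\Theta(c\cdot 2^i)$, which the $1/\poly(n)$‑biased $h_i$ (of range $n^{2^{-i}}$, hence essentially $K_i$‑wise independent on any $K_i$ balls once the polynomial degree of $\delta_1^{-1}$ is a suitable multiple of $c$) reproduces up to error $1/\poly(n)$. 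Because the target relative deviation is the fixed polylog $\beta=(\log n)^{-\gamma}$ while the per‑bin mean stays at least $\log^{a+B} n$ through the first $k$ levels, $\beta^2\mu_i/K_i$ is a large polylog and the moment bound $\big(K_i/\beta^2\mu_i\big)^{K_i/2}$ is polynomially small at every level once $B$ is large relative to $a,c,\gamma$; a union bound over the at most $n$ bins of each of the $k=O(\log\log n)$ levels then gives failure $n^{-c-1}$, and the relative deviations compound to $(1+\beta)^k-1\le\epsilon_0$.

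\textbf{Step 2 (Chernoff inside a coarse bin).} Condition on $h_1,\dots,h_k$ so that Step~1 holds, and fix a coarse bin with its set $S$ of $N\le(1+\epsilon_0)\mu=\poly(\log n)$ balls. Exactly as in Claim~\ref{clm:SD_subsets}, a $\delta_2$‑biased $h_{k+1}$ is within $(\log n)^{-\Omega(\log n)}$ of uniform, in statistical distance, on any single subset of $U$ of size $O(\log n)$ (Lemma~\ref{lem:small_biased_space}); union‑bounding over the at most $n\cdot N^{O(\log n)}=n^{O(\log\log n)}$ subsets of size $O(\log n)$ drawn from some coarse bin, and taking $\delta_2=(\log n)^{-\Theta(\log n)}$ with a large enough constant, I may treat $h_{k+1}$ as $O(\log n)$‑wise independent on each $S$ up to total error $n^{-c-1}$. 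Then the load of an actual bin is a sum of $N$ essentially $O(\log n)$‑wise independent $0/1$ variables with mean $N/\log^B n\le(1+\epsilon_0)\,m/n$, so Lemma~\ref{lem:k_wise_independence_chernoff} with relative deviation $\delta'=C'\sqrt{\log n\cdot n/m}$ — for which $\delta'^2\cdot N/\log^B n=\Theta(C'^2\log n)$ is an $O(\log n)$‑wise quantity and the failure is $n^{-\Omega(C'^2)}$ — gives, with probability $\ge 1-n^{-c-2}$,
\[
\mathrm{load}\ \le\ \frac{N}{\log^B n}+C'\sqrt{\frac{N}{\log^B n}\cdot\log n}\ \le\ \frac mn+\Big(\epsilon_0\cdot\frac mn+2C'\sqrt{\log n\cdot\frac mn}\Big).
\]
Since $\epsilon_0\cdot m/n=(\log n)^{a-\gamma}\le(\log n)^{(a+1)/2}=\sqrt{\log n\cdot m/n}$ by $\gamma>(a-1)/2$, the parenthesised error is at most $C\sqrt{\log n}\cdot\sqrt{m/n}$ for a constant $C=C(c)$; a union bound over the $n$ actual bins (and the two $n^{-c-1}$ events) completes the proof.

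\textbf{Main obstacle.} The delicate part is Step~1: one must verify that the moment method behind Lemma~\ref{lem:uniform_distributed} still yields a \emph{polynomially} small per‑bin failure when the per‑bin mean is only $\poly(\log n)$, which it does precisely because we make the coarse bins \emph{large} ($\mu=\log^{a+B}n$ with $B$ a large constant) rather than of size $\Theta(\log^3 n)$ as in the light‑load case. This matters because, for $a>1$, the target fluctuation $O(\sqrt{\log n\cdot m/n})$ is $o(m/n)$, so a constant‑factor control of the coarse bins (as in Corollary~\ref{cor:last_hash}) is not enough — one genuinely needs relative deviation $(\log n)^{-\Omega(1)}$ there, and this is exactly what forces the two free constants to satisfy $(a-1)/2<\gamma$ with $B$ large. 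Step~2 is then a routine small‑bias–fooled Chernoff bound, just as in the proof of Theorem~\ref{thm:sym}.
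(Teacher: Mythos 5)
Your proposal is correct and follows essentially the same route as the paper's proof: re-tune the prefix functions $h_1,\ldots,h_k$ so the coarse bins have polylogarithmic mean load controlled to within relative error $o(\sqrt{\log n\cdot n/m})$ by a level-by-level moment argument in the $\delta_1$-biased spaces (the paper's Claim~\ref{clm:heavy_load}, which uses coarse bins of size $\log^{2a}n$ and per-level error $\beta/(k+2-i)^2$ rather than your $\log^{B}n$ and uniform $(\log n)^{-\gamma}$, a cosmetic difference), and then fool the SSS95 Chernoff bound (Lemma~\ref{lem:k_wise_independence_chernoff}) for the last function $h_{k+1}$ with an $O(\log n)$-wise, $(\log n)^{-\Theta(\log n)}$-biased space. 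No substantive gap.
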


We omit $g$ in this section and change $h_1,\ldots,h_{k+1}$ with different parameters. We choose $k=\log \frac{\log n}{ (2 a) \log \log n}$, $h_{i}$ to denote a hash function from $U$ to $[n^{2^{-i}}]$ for $i \in [k]$, and $h_{k+1}$ to denote a hash function from $U$ to $[n^{2^{-k}}]=[\log^{2a} n]$ such that $h_1 \circ h_2 \circ \cdots \circ h_k \circ h_{k+1}$ constitute a hash function from $U$ to $[n]$. We set $\beta= 4(c+2) \sqrt{\log n} \sqrt{\frac{n}{m}}$. For convenience, we still think $h_1 \circ h_2 \circ \cdots \circ h_i$ as a hash function maps to $n^{1-2^{-i}}$ bins for any $i \le k$. In this section, we still use $\delta_1$-biased spaces on $h_1,\ldots,h_k$ and a $\delta_2$-biased space on $h_{k+1}$ for $\delta_1=1/\poly(n)$ and $\delta_2=(\log n)^{-O(\log n)}$.

\begin{claim}\label{clm:heavy_load}
For any constant $c>0$, there exists $\delta_1=1/\poly(n)$ such that given $m=\log^a n \cdot n$ balls, with probability $1 - n^{-c-1}$, for any $i \in [k]$ and any bin $b \in [n^{1-2^{-i}}]$, there are less than $\prod_{j \le i}(1+\frac{\beta}{(k+2-i)^2}) \cdot \frac{m}{n} \cdot n^{2^{-i}}$ balls in this bin.
\end{claim}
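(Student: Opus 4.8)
The plan is to repeat the proof of Lemma~\ref{lem:uniform_distributed} (Appendix~\ref{app:proof}) with the new parameters and re-verify that the induction still closes, arguing by induction on $i$ from $1$ to $k$. Write $\mu_i=\frac{m}{n}n^{2^{-i}}$ for the expected load of a level-$i$ bin, $\beta_i=\frac{\beta}{(k+2-i)^2}$, and let $N_i$ be the claimed bound, with the product read as $\prod_{j\le i}(1+\beta_j)$ (so $N_i=\prod_{j\le i}(1+\beta_j)\,\mu_i$; any other reading of the subscript would already blow the bound up to $\poly(\log n)$). The inductive hypothesis is that with probability at least $1-(i-1)n^{-c-2}$ over $h_1,\dots,h_{i-1}$ every bin of $h_1\circ\cdots\circ h_{i-1}$ holds fewer than $N_{i-1}$ balls; condition on this. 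Since $h_1\circ\cdots\circ h_i=(h_1\circ\cdots\circ h_{i-1})\circ h_i$, a bin $b'\in[n^{1-2^{-i}}]$ sits inside a level-$(i-1)$ bin with ball set $S_b$, $|S_b|<N_{i-1}$, and its load is $X=\sum_{x\in S_b}1[h_i(x)=s]$ for the relevant coordinate $s$; here $h_i$ is an $O(\log^2 n)$-wise $\delta_1$-biased function drawn independently of the conditioning, and after padding $S_b$ up to $N_{i-1}$ balls (which only increases $X$ pointwise, regardless of the law of $h_i$) the uniform mean of $X$ is $\bar\mu_i:=N_{i-1}/n^{2^{-i}}=\prod_{j\le i-1}(1+\beta_j)\,\mu_i$, and $N_i=(1+\beta_i)\bar\mu_i$; the $\delta_1$-bias shifts the true mean by only $O(N_{i-1}\delta_1)=o(1)$, which is harmless.

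The core estimate, exactly as in the proof of Lemma~\ref{lem:uniform_distributed}, is a concentration bound obtained from the $t_i$-th central moment of $X$ for an even integer $t_i=\Theta\!\big((c+3)2^i\big)$. That moment only exposes the joint law of $h_i$ on at most $t_i$ balls at a time, so the $\delta_1$-bias changes it by at most an additive $|S_b|^{t_i}\delta_1$ from its $t_i$-wise-uniform value, which is at most $(Ct_i\bar\mu_i)^{t_i/2}$ since $\bar\mu_i\ge\log^a n\cdot n^{2^{-i}}\gg t_i$. Markov's inequality then gives
\[
\Pr\big[X\ge N_i\big]=\Pr\big[X\ge(1+\beta_i)\bar\mu_i\big]\ \le\ \left(\frac{Ct_i}{\beta_i^2\,\bar\mu_i}\right)^{t_i/2}+\ \delta_1\Big(\frac{n^{2^{-i}}}{\beta_i}\Big)^{t_i},
\]
where in the second term the factor $|S_b|^{t_i}$ cancels against $(\beta_i\bar\mu_i)^{-t_i}$ because $\bar\mu_i=|S_b|/n^{2^{-i}}$.

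The parameter check is where the hypotheses $a\ge1$, $m=\log^a n\cdot n$ and $\beta=4(c+2)\sqrt{\log n}\sqrt{n/m}$ enter. Since $i\le k$, $n^{2^{-k}}=\log^{2a}n$ and $(k+2-i)^4\le(k+1)^4=O((\log\log n)^4)=o(\log^{2a}n)$, we get $\beta_i^2\bar\mu_i\ge\beta_i^2\mu_i=\frac{16(c+2)^2\log n}{(k+2-i)^4}n^{2^{-i}}\ge n^{2^{-i}}$, and more precisely $\log(\beta_i^2\bar\mu_i)=2^{-i}\log n+\Theta(\log\log n)$ while $\log(Ct_i)=O(i+\log(c+3))=O(\log\log n)$; with the constant in $t_i=\Theta\!\big((c+3)2^i\big)$ large enough the first term is at most $n^{-c-3}$ for every $i\le k$ (the tightest case is $i=k$, $a=1$, which still works because there $\beta_k^2\bar\mu_k\ge(c+2)^2\log^3 n$ and $2^{-k}\log n=2\log\log n$). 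For the second term, the key point is $t_i\cdot2^{-i}=O(c+3)$, so $(n^{2^{-i}})^{t_i}=n^{O(c+3)}$ and, using $1/\beta\le\poly(\log n)$ and $(k+2-i)^2\le\log n$ for $a\ge1$, also $\beta_i^{-t_i}=n^{O((a+1)(c+3))}$; hence the whole term is $\poly(n)\cdot\delta_1$, so taking $\delta_1=1/\poly(n)$ with a polynomial depending on $a$ and $c$ makes it $\le n^{-c-3}$ too. A union bound over the at most $n^{1-2^{-i}}\le n$ level-$i$ bins then adds at most $n^{-c-2}$ to the failure probability, closing the induction; summing over the $k\le\log n$ levels gives total failure probability at most $n^{-c-1}$. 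Finally $N_i$ has the asserted form because $\prod_{j\le i}(1+\beta_j)\le\exp\!\big(\beta\sum_{r\ge1}r^{-2}\big)=\exp(O(\beta))$, which is $O_c(1)$ since $\beta=4(c+2)\log^{(1-a)/2}n\le4(c+2)$.

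I expect the main obstacle to be this concentration step for sums of $\delta_1$-biased indicators when $\delta_1$ is only inverse-polynomial: the naive $t$-th central moment estimate picks up the error $|S_b|^t\delta_1$, so one is forced to let the moment degree grow as $t_i\asymp 2^i$ --- tiny at the coarse levels (where bins are polynomially large but only weak $(1+\beta_i)$-concentration is needed) and only $\Theta(\log n/\log\log n)$ at level $k$ --- precisely so that $t_i2^{-i}$ stays bounded and both error contributions remain polynomially controlled. Pinning down the constants in $t_i$ and the degree of the polynomial defining $\delta_1$ as functions of $a$ and $c$, and double-checking the borderline level $i=k$ when $a=1$, are the only remaining points and are routine bookkeeping.
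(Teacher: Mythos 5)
Your proposal is correct and follows essentially the same route as the paper's own proof: an induction over the levels $i\le k$, a $\Theta_c(2^i)$-th central-moment (Markov) bound for the load of a fixed level-$i$ bin under the $O(\log^2 n)$-wise $\delta_1$-biased $h_i$, with the bias error bounded additively and absorbed by taking $\delta_1=1/\poly(n)$ with a large enough polynomial, followed by a union bound over bins and levels (and your reading of the product as $\prod_{j\le i}(1+\beta/(k+2-j)^2)$ is the intended one). The only quibble is bookkeeping: by Lemma~\ref{lem:small_biased_space} the per-term deviation of the moment is $\delta_1$ times a factor exponential in $t_i$ (the paper writes the total error as $\delta_1 s^{2b}$), not just $\delta_1$, but since $t_i 2^{-i}=O(c+3)$ this factor is $\poly(n)$ and is absorbed exactly as in your argument.
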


\begin{proof}
We still use induction on $i$. The base case is $i=0$. Because there are at most $m$ balls, the hypothesis is true.

Suppose it is true for $i=l$. Now we fix a bin and assume there are $s=\prod_{j \le l}(1+\frac{\beta}{(k+2-i)^2}) \cdot \frac{m}{n} n^{2^{-l}} \le (1+\beta) \frac{m}{n} n^{2^{-l}}$ balls in this bin from the induction hypothesis. $h_{l+1}$ maps these $s$ balls to $t=n^{2^{-(l+1)}}$ bins. We will prove that with high probability, every bin in these $t$ bins of $h_{l+1}$ contains at most $(1+\frac{\beta}{(k+1-l)^2})s/t$ balls.

We use $X_i \in \{0,1\}$ to denote whether ball $i$ is in one fixed bin of $[t]$ or not. Hence $\Pr[X_i=1]=1/t$. Let $Y_i=X_i - \E[X_i]$. Therefore $\E[Y_i]=0$ and $\E[|Y_i|^l] \le 1/t$ for any $l \ge 2$. Let $b=\beta 2^l$ for a large constant $\beta$ later.
\begin{align*}
\Pr_{D_{\delta_1}}[\sum_i X_i > (1+\frac{\beta}{(k+1-l)^2}) s/t] & \le \frac{\E_{D_{\delta_1}}[(\sum_i Y_i)^b]}{(\frac{\beta}{(k+1-l)^2} s/t)^b} \\
& \le \frac{ \sum_{i_1,\ldots,i_b} \E_U[Y_{i_1} \cdots Y_{i_b}] + \delta_1 s^{2b}}{(\frac{\beta}{(k+1-l)^2} s/t)^b}\\
& \le \frac{2^b b! (s/t)^{b/2} + \delta_1 s^{2b}}{(\frac{\beta}{(k+1-l)^2} s/t)^b}\\
& \le \left(\frac{2 b (s/t)}{(\frac{\beta}{(k+1-l)^2} s/t)^2}\right)^{b/2} + \delta_1 \cdot s^{2b}\\
\end{align*}
We use these bounds $k=\log \frac{\log n}{ (2 l) \log \log n} < \log \log n$, $b<\beta 2^k < \frac{\beta \log n}{(2 l) \log \log n}$ and $n^{2^{-l-1}} \ge n^{2^{k}} \ge \log^{2 l} n \ge (m/n)^2$ to simplify the above bound by
\begin{align*}
& \left(\frac{2 \log n }{\frac{\beta^2}{(\log \log n)^4} \cdot s/t}\right)^{b/2} + \delta_1 s^{2b}\\
\le & \left(\frac{2 \log^2 n}{ (\log n \cdot \frac{n}{m}) \cdot (\frac{m}{n} n^{2^{-l-1}})} \right)^{b/2} + \delta_1 s^{2b}\\
\le & \left( \frac{1}{n^{0.5 \cdot 2^{-l-1}}}\right)^{b/2} + \delta_1 s^{2b}\\
\le & n^{- 0.5 \cdot 2^{-l-1} \cdot \beta 2^l /2} + \delta_1 \left(\frac{2m}{n} n^{2^{-l}}\right)^{2 \beta 2^l}\le n^{ - \beta /8} + \delta_1 \cdot n^{6 \beta}.
\end{align*}
Hence we choose the two parameters $\beta > 8(c+2)$ and $\delta_1=n^{-6\beta - c - 2}$ such that the above probability is bounded by $2 n^{-c-2}$. Finally, we apply the union bound on $i$ and all bins.
\end{proof}

\begin{proofof}{Theorem \ref{thm:heavy_load}}
We first apply Claim \ref{clm:heavy_load} to $h_1,\ldots,h_k$.

In $h_{k+1}$, we first consider it as a $b=16(c+2)^2 \log n=O(\log n)$-wise independent distribution that maps $s<\prod_{j \le k}(1+\frac{\beta}{(k+2-i)^2}) \cdot \frac{m}{n} n^{2^{-k}}$ balls to $t = n^{2^{-k}}$ bins. From Lemma \ref{lem:k_wise_independence_chernoff} and Theorem 5 (I) in \cite{SSS95}, we bound the probability that one bin receives more than $(1+\beta) s/t$ by $e^{\beta^2 \cdot \E[s/t]/3} \le n^{-c-2}$ given $b \ge \beta^2 \E[s/t]$.

Then we choose $\delta_2=(\log n)^{- b \cdot 5 a}=(\log n)^{- O( \log n)}$ such that any $\delta_2$-biased space from $[2 \frac{m}{n} \log ^{2 a} n]$ to $[\log^{2 a} n]$ is $\delta_2 \cdot {2 \frac{m}{n} \log ^{2 a} n \choose \le b} \cdot (\log^{2 a} n)^{b}<n^{-c-2}$-close to a $b$-wise independent distribution. Hence in $h_{k+1}$, with probability at most $2 \cdot n^{-c-2}$, there is one bin that receives more than $(1+\beta) s/t$ balls. Overall, the number of balls in any bin of $[n]$ is at most \[ \prod_{i \le k} (1+\frac{\beta}{(k+2-i)^2}) (1+\beta) \frac{m}{n} \le (1 + \sum_{i \le k+1} \frac{\beta}{(k+2-i)^2}) \frac{m}{n} \le \left(1 + 2\beta \right) \frac{m}{n}.\]
\end{proofof}

%%%%%%%%%%%%%%%%%%%%%%%%%%%%%%%%%%%%%%%%%%%%%%%%%%%%%%%%%%%%%%%%%%%%%%%%

\section*{Acknowledgement}
The author is grateful to David Zuckerman for his constant support and encouragement, as well as for many fruitful discussions. We thank Eric Price for introducing us to the multiple-choice schemes. We also thank the anonymous
referee for the detailed feedback and comments.

\bibliographystyle{alpha}
\bibliography{rand}

\appendix
\section{Proof of Lemma~\ref{lem:uniform_distributed}}\label{app:proof}
We apply an induction from $i=0$ to $i=k$. The base case $i=0$ is true because there are at most $m$ balls.

Suppose it is true for $i=l<d$. For a fixed bin $j \in [n^{1-\frac{1}{2^l}}]$, there are at most $(1+\beta)^l n^{\frac{1}{2^l}} \cdot \frac{m}{n}$ balls. With out loss of generality, we assume there are exactly $s=(1+\beta)^l n^{\frac{1}{2^l}} \cdot \frac{m}{n}$ balls from the induction hypothesis. Under the hash function $h_{l+1}$, we allocate these balls into $t=n^{\frac{1}{2^{l+1}}}$ bins.

We fix one bin in $h_{l+1}$ and prove that this bin receives at most \[(1+\beta)s/t=(1+\beta) \cdot (1+\beta)^l n^{\frac{1}{2^l}}/n^{\frac{1}{2^{l+1}}} \cdot \frac{m}{n}= (1+\beta)^{l+1} n^{\frac{1}{2^{l+1}}} \cdot \frac{m}{n}\]
balls with probability $\le 2 n^{-c-2}$ in a $\log^3 n$-wise $\delta_1$-biased space for $\delta_1=1/\poly(n)$ with a sufficiently large polynomial. We use $X_i \in \{0,1\}$ to denote the $i$th ball is in the bin or not. Hence $\E[\sum_{i \in [s]}X_i]=s/t$.

For convenience, we use $Y_i=X_i - \E[X_i]$. Hence $Y_i=1 - 1/t$ w.p. $1/t$, o.w. $Y_i = -1/t$. Notice that $\E[Y_i]=0$ and $|\E[Y_i^l]| \le 1/t$ for any $l \ge 2$.

We choose $b=2^l \cdot \beta=O(\log n)$ for a large even number $\beta$ and compute the $b$th moment of $\sum_{i \in [s]} Y_i$ as follows.
\begin{align*}
\Pr[|\sum_{i \in [s]} X_i| > (1+\beta) s/t] &\le \E_{\delta_1 \textit{-biased}}[(\sum_i Y_i)^b]/(\beta s/t)^b\\
& \le \frac{\E_{U}[(\sum_i Y_i)^b] + \delta_1 \cdot s^{2b} t^{b}}{(\beta s/t)^b} \\
& \le \frac{\sum_{i_1,\ldots,i_b} \E[Y_{i_1}\cdot Y_{i_2} \cdot \cdots \cdot Y_{i_b}]+ \delta_1 \cdot s^{3b}}{(\beta s/t)^b}\\
& \le \frac{\sum_{j=1}^{b/2}{b-j-1 \choose j-1} \cdot \frac{b!}{2^{b/2}} \cdot s^j (1/t)^j + \delta_1 \cdot s^{3b}}{(\beta s/t)^b}\\
& \le \frac{2^{b/2} b! \cdot (s/t)^{b/2} + \delta_1 \cdot s^{3b}}{(\beta s/t)^b}
\end{align*}
Because $s/t \ge n^{\frac{1}{2^{k}}} \ge \log^3 n, b \le \beta 2^k \le \frac{\beta \log n}{3\log \log n} \le (s/t)^{1/3}$ and $\beta=(\log n)^{-0.2}<(s/t)^{0.1}$, we simplify it to
\begin{align*}
& \left(\frac{2 b^2 \cdot s/t}{(\beta s/t)^2}\right)^{b/2} +\delta_1 \cdot s^{3b} \le \left(\frac{2 (s/t)^{2/3} \cdot s/t}{(s/t)^{1.8}}\right)^{b/2} + \delta_1 \cdot s^{3b} \\
= & (s/t)^{(-0.1) \cdot b/2} + \delta_1 \cdot s^{3b}
\le (n^{\frac{2}{2^{l+1}}})^{-0.1 \cdot \beta 2^l} + \delta_1 (n^{\frac{3}{2^l}})^{3 \beta \cdot 2^l} = n^{-c-2} + \delta_1 n^{9\beta} \le 2 n^{-c-2}.
\end{align*}
Finally we choose $\beta=40(c+2)=O(1)$ and $\delta_1=n^{-9\beta - c -2}$ to finish the proof.

\end{document}